\documentclass[11pt,letterpaper]{article} 
\usepackage[letterpaper,margin=1in]{geometry}
\usepackage[parfill]{parskip}
\usepackage{setspace}
\usepackage{hyperref}
\usepackage{url}
\usepackage[
  backend=biber,
  style=alphabetic,
  natbib=true,
  maxbibnames=99,
  maxalphanames=99,
  maxcitenames=5
]{biblatex}
\usepackage{todonotes}
\usepackage{amsmath}
\usepackage{amssymb}
\usepackage{amsthm}
\usepackage{thmtools}
\usepackage{amsfonts}
\usepackage{mathtools}
\usepackage{xcolor}
\usepackage{graphicx}
\usepackage{tikz}
\usetikzlibrary{patterns,3d}
\usepackage{pgfplots}
\pgfplotsset{compat=1.18}
\usepgfplotslibrary{patchplots}
\usepackage{thmtools}
\usepackage{thm-restate}
\usepackage{cleveref}

\usepackage{subcaption}
\Crefname{subfigure}{Figure}{Figures}
\usepackage{authblk}
\usepackage{algorithm}
\usepackage[noend]{algpseudocode}
\algrenewcommand\algorithmicrequire{\textbf{Input:}}
\algrenewcommand\algorithmicensure{\textbf{Output:}}

\newtheorem{theorem}{Theorem}[section]
\newtheorem{corollary}[theorem]{Corollary}
\newtheorem{lemma}[theorem]{Lemma}
\newtheorem{proposition}[theorem]{Proposition}

\theoremstyle{definition}

\newcommand{\conv}{\operatorname{conv}}

\title{Tropical Circuits with Scalar Multiplication Gates}

\author[1]{Christoph Hertrich}
\author[1]{Moritz Stargalla}
\affil[1]{University of Technology Nuremberg

{\normalsize\texttt{christoph.hertrich@utn.de}, \texttt{moritz.stargalla@utn.de}}}

\definecolor{OIblue}{RGB}{86,180,233}
\definecolor{OIorange}{RGB}{213,94,0}
\definecolor{OIgreen}{RGB}{0,158,115}

\newcommand{\R}{\mathbb{R}}
\newcommand{\N}{\mathbb{N}}

\DeclareMathOperator*{\xc}{xc}
\DeclareMathOperator*{\vxc}{vxc}
\DeclareMathOperator*{\nnc}{nnc}
\DeclareMathOperator*{\mnnc}{mnnc}
\DeclareMathOperator*{\size}{size}
\DeclareMathOperator*{\supp}{supp}

\begin{document}

\maketitle

\begin{abstract}
\noindent
We study tropical circuits with scalar multiplication gates, that is, algebraic circuits whose gates implement $\max$, $+$, or multiplication with a positive constant.
For such circuits, we prove exponential size lower bounds for computing maximum weight directed spanning trees and maximum weight bipartite perfect matchings. As a corollary, we obtain an exponential size separation between monotone and non-monotone maxout neural networks, which generalize the popularly used ReLU neural networks. One conclusion from this is that neural network models with enforced convexity constraints, such as \emph{input-convex neural networks} (ICNNs), sometimes need to be exponentially larger than their unrestricted counterparts in order to express the same functions.
\end{abstract}
\clearpage
\section{Introduction}

\emph{Tropical circuits}~\cite{jukna2023tropical}, also known as max-plus circuits, are a variant of classical arithmetic circuits that use maximum and addition gates instead of addition and multiplication gates. Besides fundamental interest in the power of different models of computation and their dependence on the set of allowed operations, a primary motivation to study tropical circuits is to prove lower bounds on \emph{pure dynamic programs}. By definition, a pure dynamic program consists of a predefined sequence of max (or min) and plus operations only. One example is the Bellman-Ford algorithm. Since every pure dynamic program can be written as a tropical circuit, lower bounds on the latter imply lower bounds on the former. That way, it has for example been shown that every pure dynamic program for the minimum spanning tree problem needs exponentially many iterations~\cite{jukna2019greedy}.

Recently, tropical circuits received increased attention due to their close connection to neural networks. Variants of tropical circuits have been used in order to prove size upper bounds for neural networks with \emph{rectified linear unit} (ReLU) activations~\cite{hertrich2025relu,hertrich2026arithmetic}. Such neural networks can be defined as a circuit in which each node (neuron) computes an affine function of the outputs of its predecessors composed with the ReLU function $x\mapsto\max\{0,x\}$. It is straightforward to verify that a ReLU network can exactly simulate every tropical circuit. However, ReLU networks are strictly more powerful than tropical circuits, as they can, for instance, solve the minimum spanning tree problem in polynomial size~\cite{fomin2016subtraction,hertrich2025relu}. The main reason for this distinction seems to be the ability of neural networks to implement subtraction via negative weights: the construction at hand basically implements a $(\max,+,-)$-circuit.

However, besides subtraction, there is a second feature that seemingly makes neural networks more powerful than tropical circuits: namely scalar multiplication with arbitrary real constants. This opens up the question of how much additional power this feature provides alone, without allowing subtraction. To study this question, we propose to augment the model of tropical circuits by scalar multiplication gates with positive constants, calling the resulting model \emph{scalar tropical circuits} (STCs).

Every STC computes a continuous piecewise linear (CPWL) function of the form $c\mapsto \max_{x\in P}c^\top x$ for a polytope $P$, which is the \emph{support function} $f_P$ of the polytope $P$.

\subsection{Our Contributions}
In the following we first detail our contribution in the context of tropical circuit theory and afterwards discuss implications, particularly in the context of neural networks.

\paragraph{Lower Bounds on the Size of STCs.}
The size of a regular tropical circuit is the number of $\max$ and plus gates.
We also measure the size of an STC $\Phi$, denoted as $\size(\Phi)$, as the number of $\max$ and plus gates, not counting scalar multiplication gates.
Furthermore, we use $\size_+(\Phi)$ to count only plus gates.

It is easy to see that there are functions that can be computed by STCs of smaller size than tropical circuits.
For example, $x\mapsto n x$ for $n\in \N_{\geq 2}$ can be realized by an STC of size $0$ with one scalar multiplication gate, whereas a normal tropical circuit would require $\Omega(\log n)$ plus gates.
However, for functions that have only 0-1 coefficients, that is, functions of the form $\max_{a\in A}a^\top x$ with $A\subseteq \{0,1\}^d$, the situation is less clear. Such functions are precisely support functions of 0-1 polytopes, also called \emph{multilinear tropical polynomials}.
To transfer size lower bounds from tropical circuits to STCs for a particular function, one must show that allowing arbitrary positive scalar multiplications cannot reduce the circuit size.

Interestingly, while one could expect that non-integral constants are of little use for representing support functions of 0-1 polytopes, this intuition breaks for a closely related question on neural network \emph{depth} instead of \emph{size}: Even for representing the very simple maximum function $\max\{x_1,\dots,x_n\}$, which is the support function of the standard simplex, one needs $\log_2(n)$ hidden ReLU layers \emph{if one restricts to integer weights}~\citep{haase2023lower}. However, when allowing fractional weights, this can be done with $\log_3(n)$ hidden layers~\citep{bakaev2026better}. One motivation for us is the question whether a similar phenomenon can occur for \emph{size} instead of \emph{depth}.

A systematic way to obtain lower bounds on STCs is via \emph{extension complexity}. The extension complexity $\xc(P)$ of a polytope $P$ is the minimum number of facets of any polytope $Q$ that linearly projects to $P$. Extension complexity is a frequently studied notion in combinatorial optimization as it quantifies the minimum number of inequalities required in any linear program optimizing over $P$. It follows from~\cite{hertrich2024neural} that $\xc(P)/2$ lower-bounds the size of any STC computing $f_P$. In seminal work, \citet{fiorini2015exponential} and \citet{rothvoss2017matching} showed that, for example, the TSP polytope and the matching polytope have exponential extension complexity, which implies that STCs optimizing over those polytopes must have exponential size, too.
However, there are several polytopes with polynomial extension complexity for which there are still exponential lower bounds on tropical circuits computing their support function, e.g., the Birkhoff polytope \citep{jerrum1982some} or the (un)directed spanning tree polytope \citep{jerrum1982some,jukna2015lower,jukna2019greedy}. This leads to the following question:
\begin{center}
    \emph{Are there polytopes with polynomial extension complexity that require STCs of exponential size?}
\end{center}
We answer this question in the affirmative by giving two examples of classes of such polytopes.
The first example is the \emph{Birkhoff polytope} $P_{\rm PERM}$, which is the convex hull of all characteristic vectors of perfect matchings of the complete bipartite graph $K_{n,n}$.
The problem of evaluating the support function $f_{P_{\rm PERM}}$ is often called the \emph{assignment problem}.
Further, $f_{P_{\rm PERM}}$ is the tropical version of the permanent.
\citet{jerrum1982some} proved that any tropical circuit computing $f_{P_{\rm PERM}}$ must have at least $2^{\Omega(n)}$ plus gates (in fact, they show that at least $n(2^n-1)$ plus gates are necessary).
We extend this result by showing that any STC computing $f_{P_{\rm PERM}}$ must have at least $2^{\Omega(n)}$ plus gates.
\begin{restatable}{thm}{matchingstclowerbound}
\label{thm:matching_lower_bound}
Let $P_{\rm PERM}$ be the Birkhoff Polytope for $K_{n,n}$.
Then every STC $\Phi$ computing $f_{P_{\rm PERM}}$ satisfies $\size_+(\Phi)\in 2^{\Omega(n)}$.
\end{restatable}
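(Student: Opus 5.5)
We will run the classical Jerrum--Snir argument, but on Minkowski sums and scalings of polytopes instead of on sets of monomials. This requires a structural lemma showing that positive scalings can be eliminated when the output polytope is a $0$-$1$ polytope with the rigid combinatorial structure of $P_{\rm PERM}$.

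\textbf{Translation to polytopes.} Each gate $g$ of an STC $\Phi$ computes the support function $f_{P_g}$ of a polytope $P_g\subseteq\R^{n\times n}$. Input gates give points $e_{ij}$ (or $\{0\}$ for constants). A $\max$ gate gives $\conv(P_{g_1}\cup P_{g_2})$, a plus gate gives the Minkowski sum $P_{g_1}+P_{g_2}$, and scalar multiplication by $\lambda>0$ gives $\lambda P_g$. Every $P_g$ lies in the nonnegative orthant. The output satisfies $P_{\rm out}=P_{\rm PERM}$.

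We then \emph{prune} the circuit. For each gate, replace $P_g$ by the face $F_g$ of $P_g$ that actually contributes to vertices of $P_{\rm PERM}$. Formally, propagate top-down: if $P=A+B$, every face of $P$ decomposes uniquely into faces of $A$ and $B$, and vertices of $P$ are sums of vertices. If $P=\conv(A\cup B)$, vertices of $P$ are vertices of $A$ or of $B$. This gives the key structural fact. Every vertex $v$ of $P_{\rm PERM}$ (a permutation matrix $\chi_\sigma$) is a sum $\sum_{\ell}\lambda_\ell u_\ell$ along a ``parse tree'' of the circuit. The sum runs over input leaves reached through plus gates, with $\lambda_\ell$ the product of scalars on the path, and each $u_\ell\in\{0\}\cup\{e_{ij}\}$. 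Since all terms are nonnegative and $v$ is $0$-$1$, the support of each plus gate's contribution lies in $\supp(\sigma)$. The entries then add up: for each edge $(i,\sigma(i))$, the total weight of leaves labelled $e_{i\sigma(i)}$ is exactly $1$. We must additionally rule out vertex sums landing outside $P_{\rm PERM}$. Because $P_{\rm PERM}$ is an output face, every vertex of every subcircuit's relevant face, summed with \emph{any} vertex of the sibling parts, must again be a point of $P_{\rm PERM}$.

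\textbf{Rectangle argument.} For a plus gate $g=g_1+g_2$ on the pruned circuit, consider the set $V_g$ of vertices $w$ of $F_g$ that occur in the decomposition of some permutation. Each such $w$ is a nonnegative vector supported on a partial matching. We claim the following: if $w\in V_g$ occurs in $\chi_\sigma$ and $w'\in V_g$ occurs in $\chi_\tau$, with complements $r$ and $r'$, then $w'+r$ also lies in $P_{\rm PERM}$, since the remaining gates compute polytopes containing both. The vertices of $P_{\rm PERM}$ are $0$-$1$, and $P_{\rm PERM}\subseteq\{x:\sum_j x_{ij}=1\}$. From this, $w$ and $w'$ have the same row sums and the same column sums, and these lie in $\{0,1\}$. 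Hence we can choose a ``balanced'' gate, exactly as in Jerrum--Snir/Schnorr. Walking down from the output along the child with larger count of permutations, we find a plus gate whose subterm covers between $n/3$ and $2n/3$ rows. The permutations passing through it form a ``rectangle'': choices of the row set $I$ with $|I|=k$, the column set $J$ fixed by the row/column-sum invariance, times independent completions on the complement. Such a rectangle contains at most $k!(n-k)!$ permutations. Covering all $n!$ permutations then needs $\binom{n}{k}\ge 2^{\Omega(n)}$ plus gates.

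\textbf{Main obstacle.} The delicate step is that, with scalars, a vertex $w$ of a subcircuit polytope need not be $0$-$1$. For example, $\tfrac12 e_{11}+\tfrac12 e_{11}$ is possible, and a weight could be split among several branches. The rectangle argument needs the \emph{support} of $w$ to determine a row set and column set consistently. My plan is to argue on supports only. Row sums of $w$ over rows with nonzero support must be the same constant vector $s\in[0,1]^n$ for all $w\in V_g$, by the exchange property above. The set $I=\{i:s_i>0\}$ is then common to the gate, and combining with the complement forces, for $i\in I$, the complementary parts to be consistent. Fractional $s_i\in(0,1)$ still pins $I$ and the column analogue $J$. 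The count then bounds the number of permutations whose decomposition uses $g$ by those agreeing with some matching on $I\times J$ and some matching on the complement. I would first verify this exchange claim carefully via the Minkowski-face decomposition, since everything else is the standard counting.
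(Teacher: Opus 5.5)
Your architecture is the paper's. The measure is the number of rows with positive row sum. The structural fact is that row and column sums are constant on the polytope of a gate sitting inside $P_{\rm PERM}$; this is the paper's fixed-node-weight lemma for a rectangle $X+Y\subseteq P_{\rm PERM}$, and it forces $\sigma(I)=J$ and the bound $k!(n-k)!$. Each permutation is then charged to a balanced plus gate, and the paper merely packages your walk-down as its Decomposition Lemma.

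\textbf{The gap.} The gap is in the exchange step, which you rightly flag as the crux but state incorrectly. If $g$ occurs in the parse tree of $\chi_\sigma$, then $\chi_\sigma=r+c\,w$. Here $w\in P_g$ is the sub-parse-tree monomial, and $c>0$ is the product of the scalars on the path from that occurrence to the output; $c$ varies with the occurrence. Swapping gives $r+c\,w'\in P_{\rm PERM}$, not $r+w'$. For a counterexample, take $n=3$ and let $g$ compute $2x_{11}$, so $P_g=\{2e_{11}\}$. Take the maximum of $\tfrac12 g+x_{22}+x_{33}$, of $\tfrac14 g+\tfrac12 x_{11}+x_{23}+x_{32}$, and of the plain monomials of the other four permutations. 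This computes $f_{P_{\rm PERM}}$. The contribution of $g$ is $e_{11}$ in $\chi_{\rm id}$ (complement $e_{22}+e_{33}$) and $\tfrac12 e_{11}$ in the other matrix, yet $\tfrac12 e_{11}+e_{22}+e_{33}\notin P_{\rm PERM}$. So the contributions of $g$ to different permutations need not have equal row sums. Only the unscaled points of $P_g$ do, and their common row-sum vector (here $2$ in row $1$) need not lie in $[0,1]^n$, let alone $\{0,1\}^n$. What survives, and is all you need, is that the supports $I_g,J_g$ of the common row- and column-sum vectors of $P_g$ are intrinsic to $g$.

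\textbf{How to repair it.} The route you propose for verifying the exchange is also not the right one. Pruning to a \emph{relevant face} $F_g$ is ill-defined for a gate shared by several contexts that maximize different directions, and it is unnecessary. The Minkowski-face decomposition at a single gate does not by itself say how $P_g$ sits inside the output. The correct argument is the swap itself. Replacing the sub-parse-tree at that occurrence yields another parse tree of the unfolded circuit, whose monomial is pointwise dominated by $f_\Phi$ and hence lies in $P_{\rm PERM}$. Since $P_g$ is the convex hull of its sub-parse-tree monomials, $r+c\,P_g\subseteq P_{\rm PERM}$, and the constant-row-sum argument applies to all of $P_g$. The paper obtains exactly this by cutting the chosen gate $v$ off as a fresh input $x_{d+1}$. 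This gives $P_{\rm PERM}=\conv\bigcup_T(\{a_T\}+c_TP_v)$ with $c_T>0$ for traces through $v$, and factoring out $c_{\min}$ yields a genuine rectangle $(c_{\min}P_v,B^*)$. Separately, the walk-down must follow, inside the parse tree of $\chi_\sigma$, the child whose sub-monomial has the larger row support, not the child with the larger count of permutations. Row supports are unchanged by $\max$ and scalar gates and are united at plus gates, so this walk reaches a balanced plus gate. With these repairs your argument goes through and gives at least $\binom{n}{\lfloor n/3\rfloor}$ plus gates, as in the paper.
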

In particular, this shows that the lower bound via extension complexity in~\citep{hertrich2024neural} can be exponentially loose, because $\xc(P_{\rm PERM})\in \mathcal{O}(n^2)$ \citep{fiorini2013combinatorial}.

We prove a similar result for the directed spanning tree polytope $P_{\rm DST}$, which is the convex hull of all characteristic vectors of directed spanning trees with root $n$ of the complete directed graph with $n$ nodes.
Again, this extends a result by \citet{jerrum1982some}, who showed that any tropical circuit computing $f_{P_{\rm DST}}$ has at least $2^{\Omega(n)}$ plus gates (in fact, they show that at least $(4/3)^n n^{-1}$ plus gates are necessary).
\begin{restatable}{thm}{dststclowerbound}
\label{thm:dst_lower_bound}
    Every STC $\Phi$ computing $f_{P_{\rm DST}}$ satisfies $\size_+(\Phi)\in 2^{\Omega(n)}$.
\end{restatable}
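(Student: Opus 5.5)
Every STC computes the support function of a polytope obtained by Minkowski sums, convex hulls and positive scalings. Each gate $v$ of $\Phi$ therefore carries a polytope $P_v$ with $f_{P_v}$ equal to the function computed at $v$. A leaf $x_i$ gives $\{e_i\}$, a max gate gives $\conv(P_u\cup P_w)$, a plus gate gives $P_u+P_w$, and a scalar gate gives $\lambda P_u$. Since the output computes $f_{P_{\rm DST}}$, the output polytope equals $P_{\rm DST}$, and so its vertex set is exactly the set of characteristic vectors of directed spanning trees rooted at $n$.

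The plan is to reduce to a Jerrum--Snir style decomposition argument, as for \Cref{thm:matching_lower_bound}. I trace vertices backwards. Each vertex of a Minkowski sum is a sum of vertices of the summands, each vertex of a convex hull is a vertex of one input, and each vertex of $\lambda P$ is $\lambda$ times a vertex of $P$.

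First I would establish a structural fact about the gates feeding into $P_{\rm DST}$. Every polytope $P_v$ appearing on a path to the output with a nonzero contribution is, up to scaling, a face-like piece. Concretely, if a vertex $\chi_T$ of the output decomposes along the parse tree as $\chi_T=\sum_v \mu_v y_v$, with $y_v$ vertices of $P_v$ and $\mu_v>0$ the products of scalars along the path, then the following holds. Because everything is nonnegative and $\chi_T$ is 0-1, each $\mu_v y_v$ is a nonnegative vector supported inside $T$. Moreover, the supports of the contributions from the two children of a plus gate partition $\supp(\chi_T)$ on each coordinate additively.

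The key step is to show that scalars cannot help. Let $\alpha$ be the direction that is large exactly on the edges of $T$. Maximality of $\chi_T$ in the face selected by $\alpha$ forces every vertex of $P_v$ at each plus gate to lie in the appropriate face. I would then define, for each plus gate $v$, the set $\mathcal{A}_v$ of 0-1 patterns obtained by taking supports. I would aim to show that the output's set of spanning trees is covered by "rectangles" $\{A\cup B : A\in\mathcal{A}_u, B\in\mathcal{A}_w\}$, one for each plus gate, all of whose members are spanning trees. This holds because any mixed sum $a+b$ with $a\in P_u$ and $b\in P_w$ lies in the (scaled) output polytope. Since all points of $P_{\rm DST}$ satisfy the in-degree equalities $x(\delta^{\rm in}(v))=1$ for $v\neq n$, mixing forces the supports to combine consistently. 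Using homogeneity, I would normalize each side by the linear functional "total in-degree mass" so that scalings become harmless.

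Then I apply the known combinatorial bound of \citet{jerrum1982some}: any family of rectangles whose union is all rooted spanning trees, with each rectangle contained in the tree set and with the balance property that some rectangle covering $T$ has both parts of size between $(n-1)/3$ and $2(n-1)/3$, needs $2^{\Omega(n)}$ members. I obtain the balance property by the standard walk down the parse tree of $T$, choosing the heavier child at each plus gate and stopping at the first plus gate whose heavier child has at most $2/3$ of the edges, measured by the normalized mass. Each balanced rectangle covers at most a $2^{-\Omega(n)}$ fraction of trees, by a counting argument on rooted spanning trees split into two edge sets of balanced size. That counting estimate and the rectangle count can be taken over from Jerrum--Snir.

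The main obstacle is the middle step: showing that fractional scalars cannot produce spurious decompositions. An example would be $\chi_T=\tfrac12 y+\tfrac12 y'$ with $y,y'$ non-0-1 vectors whose supports overlap. I would handle this by working with supports only. Since every $y_v\ge0$ and the decomposition sums exactly to a 0-1 vector, overlapping supports on a plus gate would let me swap parts between different trees. Combined with the in-degree equalities, this produces an output point violating them, which gives a contradiction.
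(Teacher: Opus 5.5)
Your proposal has a genuine gap at the step you yourself flag as the main obstacle, and the fix you sketch does not work. Overlapping supports at a plus gate are perfectly consistent and lead to no contradiction. Take any STC for $f_{P_{\rm DST}}$ and append two scalar gates with $\lambda=\tfrac12$ and one $+$ gate, so the new output computes $\tfrac12 f+\tfrac12 f$. At this plus gate both children carry $\tfrac12 P_{\rm DST}$, and every tree satisfies $\chi_T=\tfrac12\chi_T+\tfrac12\chi_T$ with identical supports. Every mixed sum $a+b$ lies in $P_{\rm DST}$, so nothing is violated.

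This one example breaks three claims of your plan.
\begin{itemize}
\item The sets $\{A\cup B: A\in\mathcal{A}_u, B\in\mathcal{A}_w\}$ contain unions of two different trees, so they are not families of spanning trees. In any case, the correct rectangle is the inside versus the outside of a selected gate, not left child versus right child.
\item The support patterns are not disjoint, so Jerrum--Snir's count for 0-1 rectangles cannot be imported.
\item Under your mass criterion this gate is balanced, since each child has mass $(n-1)/2$. Yet its rectangle $X+Y=P_{\rm DST}$ contains all $n^{n-2}$ trees, so no $2^{-\Omega(n)}$ per-rectangle bound holds for that measure. Any scale-invariant quantity built from that single linear functional is constant, so normalizing does not help.
\end{itemize}

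Two ingredients are missing: a scale-invariant, support-based measure, and a rectangle bound that works directly with fractional points and uses acyclicity there. The degree equalities you invoke (out-degrees, in the paper's orientation) cannot supply this alone. The polytope they cut out is a product of simplices, whose support function $\sum_i\max_j c_{ij}$ has an STC of size $\mathcal{O}(n^2)$.

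The paper uses the measure $\mu_A(X)=\max_{x\in V(X)}|\{i:\sum_j x_{ij}>0\}|$ together with the Decomposition Lemma. Its rectangles are $A^*=c_{\min}P_v$ and the outer part $B^*$, with $A^*+B^*\subseteq P_{\rm DST}$. Since $X+Y\subseteq P_{\rm DST}$ forces the out-weights $a_i(x)=a^*_i$ to be constant on $X$, one gets $\mu_A(X)=|I|$ for $I=\{i:a^*_i>0\}$. In the example above, $\mu_A(\tfrac12P_{\rm DST})=n-1$, so that gate is never selected.

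The acyclicity step is then as follows. Take $i\in I$ and $j\notin I$, and suppose trees $x+y$ and $x'+y'$ in $X+Y$ use $(i,j)$ and $(j,i)$ respectively. Then $x_{ij}>0$ and $y'_{ji}=1$, so the mixed point $x+y'\in P_{\rm DST}$ violates the subtour constraint for $S=\{i,j\}$. This removes at least $|I|(n-1-|I|)\geq\tfrac29(n-1)^2$ arcs. AM--GM over out-degrees then bounds each rectangle by $(\tfrac79)^{n-1}(n-1)^{n-1}$ trees, against $n^{n-2}$ trees in total. Finally, $\mu_A$ does not grow under $\conv$, so each rectangle comes from a distinct $+$ gate, which gives the bound on $\size_+(\Phi)$.
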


\paragraph{Proof Techniques.}
In our proofs, we generalize a widely used lower bounding technique from arithmetic and tropical circuit complexity.
There, one first proves a decomposition lemma: if a target polynomial with many monomials can be computed by a circuit of small size, then it can be written as a (tropical) sum of a small number of (tropical) products of simpler polynomials.
A lower bound is then obtained by showing that each such simple product can contain only few monomials of the target function, so many products are needed to generate all monomials of the target function.
This general technique has been widely used, for example, in \citep{hyafil1978parallel,valiant1980negation,hrubevs2011homogeneous,raz2011multilinear,jukna2015lower,jukna2019greedy,yehudayoff2019separating,srinivasan2020strongly,chattopadhyay2022monotone,kluk2025lower,cavalar2026negations}, also compare the discussions in \citep{shpilka2010arithmetic,jukna2016tropical}.
For multilinear homogeneous polynomials, the degree of the monomials in a polynomial was often (implicitly) used as a measure for how simple a polynomial is \citep{shpilka2010arithmetic,jukna2015lower,jukna2019greedy}.

The decomposition lemmas from tropical circuit complexity cannot be applied to STCs, because scalar multiplication gates can create non-integral monomials which correspond to non-integral vertices in the corresponding Newton polytopes.
In contrast, tropical circuits always yield Newton polytopes with integral vertices.
We therefore use the duality between support functions and polytopes to prove a more general, polytopal version of the decomposition step.
Namely, if the support function of a polytope $Q$ can be computed by a small STC $\Phi$, then $Q$ can be written as $Q = \mathrm{conv}\left(A_0 \cup \bigcup_{i=1}^k (A_i + B_i)\right)$ with $k\leq \size(\Phi)$, where the addition is Minkowski sum, and the polytopes $A_0,A_1,\dots,A_k$ are simple with respect to a suitable measure.

For the polytopes $P_{\rm PERM}$ and $P_{\rm DST}$ in \Cref{thm:matching_lower_bound} and \Cref{thm:dst_lower_bound}, we apply this decomposition and show that a single simple Minkowski sum $A_i + B_i$ can contain only a small fraction of the vertices of our target polytope.
For this, a crucial step is to choose a suitable measure.
For multilinear homogeneous polynomials, using the degree of a polynomial as a measure is often sufficient to prove strong lower bounds for arithmetic and tropical circuits.
The polytopal analogue of this measure is the function $P\mapsto \max_{v\in V(P)}|\supp(v)|$, where $V(P)$ is the set of vertices of $P$.
For STCs, this measure does not naturally give strong lower bounds and one has to find other suitable measures, which is a challenge in itself.
We instead use measures tailored to each problem such that $A_i$ being simple with respect to that measure forces the Minkowski sum $A_i+B_i$ to contain only a small fraction of vertices of the target polytope.
Compared to tropical circuits, this requires additional work, as the polytopes $A_i$ and $B_i$ are not restricted to have only integral vertices.
To this end, we derive structural constraints on Minkowski sums $A_i + B_i$ contained in the target polytope, such that combining these constraints with the fact that $A_i$ must be simple with respect to the chosen measure implies a bound on the number of vertices that can lie in one such Minkowski sum.
This implies that we need many such Minkowski sums, leading to lower bounds on circuit size.

\paragraph{Implications on Dynamic Programs.}
In light of the original motivation to study tropical circuits \citep{jukna2015lower,jukna2019greedy,jukna2023tropical}, \Cref{thm:matching_lower_bound} and \Cref{thm:dst_lower_bound} imply that pure dynamic programs solving the respective problems still need exponentially many iterations even if you augment them with the ability to perform positive scalar multiplications.

\paragraph{Implications on Neural Networks.}
Our results imply a novel exponential size separation between monotone and non-monotone neural networks. Before we explain this separation result in detail, we first introduce maxout networks as a generalization of ReLU networks that is mathematically cleaner to analyze.

A \emph{rank-$k$ maxout network} is a neural network in which each neuron computes the maximum of $k$ affine functions of the outputs of its predecessors, instead of just computing the maximum of zero with a single affine function as in ReLU networks.
The size of a maxout network is the number of maxout neurons.
For every fixed $k\in \N_{\geq 2}$, every rank-$k$ maxout network can be simulated via a ReLU network with a constant multiplicative size overhead.

A neural network is called \emph{monotone} if it does not contain any negative weight~\citep{daniels2010monotone}. As a result, a monotone (ReLU or maxout) neural network computes a \emph{convex} and \emph{monotone} function. If one instead allows negative weights on outgoing connections of input neurons, but nowhere else in the network, one still ensures convexity, even if monotonicity might be lost. Such architectures are known as \emph{input-convex neural networks} (ICNNs)~\citep{amos2017input}. Disallowing negative weights can be beneficial for a variety of reasons. From a practical viewpoint, enforced monotonicity or convexity are an effective way to incorporate prior knowledge into the network architecture and make the model more interpretable~\citep{zimmermann2026hidden}. ICNNs have gained quite some popularity in the machine learning community for such reasons~\citep{amos2017input,chen2018optimal,makkuva2020optimal,huang2021convex}, even though their performance is often significantly worse than with unrestricted models~\citep{gagneux2025convexity}. From a theoretical viewpoint, monotone models provide a restricted setting in which lower bounds can be proven more easily compared to the unrestricted model~\citep{mikulincer2024size,valerdi2024minimal,bakaev2025depth,hertrich2024neural}, similarly to what is often done for related models of computation like Boolean or arithmetic circuits~\citep{valiant1980negation,alon1987monotone,shpilka2010arithmetic}.
Note that, while in the non-monotone case ReLU and maxout networks are essentially equivalent, not every monotone maxout network can be simulated by a monotone ReLU network~\citep{bakaev2025depth}. Therefore, proving lower bounds on monotone maxout networks is potentially stronger than proving the same bounds for monotone ReLU networks.

In order to formalize complexity statements about monotone and general neural networks, \citet{hertrich2024neural} define the \emph{neural network complexity} $\nnc(P)$ of a polytope $P$ as the minimum size of a rank-$2$ maxout network computing the support function $f_P$; and the \emph{monotone neural network complexity} $\mnnc(P)$ as the minimum size of a rank-$2$ maxout ICNN computing $f_P$. If $P$ is contained in the nonnegative orthant, then they show that $\mnnc(P)$ equals the minimum size of a monotone maxout network computing $f_P$, justifying the name $\mnnc(P)$ even though it is defined via the more general ICNN model.

As for STCs, lower bounds on ICNNs can be proved via extension complexity. More precisely, \citet{hertrich2024neural} show that $\xc(P)\leq2\cdot\mnnc(P)$ for every polytope~$P$. Combined with the result by \citet{rothvoss2017matching}, this implies an exponential lower bound on $\mnnc(P_{\rm M})$ for the matching polytope $P_{\rm M}$. However, although $f_{P_{\rm M}}$ can be computed in polynomial time \citep{edmonds1965paths}, it is unknown whether it can be exactly computed by a neural network of polynomial size, compare the discussion in \citep{hertrich2025relu}.

Overall, the following question has been open so far.

\begin{quote}
    \emph{Are there functions representable with polynomial size neural networks for which maxout ICNNs need exponential size?}
\end{quote}

We answer this question in the affirmative.
More precisely, we obtain the following as corollaries of \Cref{thm:matching_lower_bound} and \Cref{thm:dst_lower_bound}.

\begin{restatable}{cor}{matchingmnnc}
\label{cor:mnnc_matching_lower_bound}
    Let $P_{\rm PERM}$ be the Birkhoff Polytope for $K_{n,n}$.
    Then $\mnnc(P_{\rm PERM})\in 2^{\Omega(n)}$.
\end{restatable}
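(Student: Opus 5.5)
The plan is to reduce \Cref{cor:mnnc_matching_lower_bound} to \Cref{thm:matching_lower_bound} by simulation. I will show that every rank-$2$ maxout ICNN computing $f_{P_{\rm PERM}}$ can be converted into an STC computing the same function whose number of plus gates is polynomially related to the size of the network. Then a $2^{\Omega(n)}$ lower bound on $\size_+$ transfers to $\mnnc(P_{\rm PERM})$.

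\textbf{Step 1: reduce to a monotone network.} Since $P_{\rm PERM}\subseteq[0,1]^{n^2}$ lies in the nonnegative orthant, the result of \citet{hertrich2024neural} quoted above gives that $\mnnc(P_{\rm PERM})$ equals the minimum size of a monotone rank-$2$ maxout network computing $f_{P_{\rm PERM}}$. So I may assume all weights are nonnegative, including those leaving the input neurons.

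\textbf{Step 2: simulate neurons by gates.} In a monotone maxout network each neuron computes $\max\{\ell_1,\ell_2\}$, where each $\ell_j=b_j+\sum_i w_{ji}z_i$ has $w_{ji}\ge 0$ and the $z_i$ are inputs or outputs of earlier neurons. I will realize each $\ell_j$ with scalar gates for the weights, dropping terms with weight $0$, and chains of plus gates for the sums, then add one max gate. The one complication is the biases $b_j\in\R$, since an STC has no constant gates in the strict sense.

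\textbf{Step 3: remove the biases.} Because $f_{P_{\rm PERM}}$ is positively homogeneous, I will replace every bias by $0$. Let $g$ be the network function with the original biases and $g_0$ the one with biases zeroed. For $t>0$ we have $g(tc)/t\to g_0(c)$ as $t\to\infty$: the network is continuous piecewise linear, and its recession (asymptotic) function is computed by the same network with biases dropped, since $\max$, addition and nonnegative scaling commute with taking $\lim_{t\to\infty}(\cdot)(tc)/t$. As $g=f_{P_{\rm PERM}}$ is homogeneous, $g_0=f_{P_{\rm PERM}}$. The bias-free monotone network is then literally an STC.

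\textbf{Step 4: count gates.} The number of plus gates equals the total number of nonzero incoming weights minus the number of affine pieces, so it is bounded by the number of edges of the network. That count is at most $O(N^2+Nn^2)$, where $N$ is the number of neurons. By \Cref{thm:matching_lower_bound}, $N^2+Nn^2\ge 2^{\Omega(n)}$, hence $N\ge 2^{\Omega(n)}$. The exponent loses only a constant factor, which is absorbed into the $\Omega$.

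I expect the main obstacle to be Step 3, justifying the removal of biases rigorously. I would do it by induction over the network, proving that each neuron's homogenized limit equals the output of the bias-free neuron. The key facts are $\lim_{t\to\infty}\max\{a(t),b(t)\}/t=\max\{\lim a(t)/t,\lim b(t)/t\}$ and linearity of the limit under nonnegative combinations. Everything else is bookkeeping.
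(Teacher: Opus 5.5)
Your proposal is correct and is essentially the paper's proof. You pass from a rank-$2$ maxout ICNN to a bias-free monotone network, simulate it by an STC with $O(s^2+sn^2)$ gates as in \Cref{lem:mon_maxout_to_STC}, and apply \Cref{thm:matching_lower_bound}; where the paper simply invokes \Cref{lem:bias_to_nonbias_for_pos_hom_functions}, your recession-limit induction is a valid self-contained proof of it, and monotonizing first is harmless since dropping biases does not touch the weights. One cosmetic slip: each arc carries one weight per affine piece, so the number of plus gates is bounded by twice the number of arcs rather than by the number of arcs, which the $O(\cdot)$ absorbs anyway.
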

This again shows that the lower bound based on extension complexity~\citep{hertrich2024neural} can be exponentially loose, since $\xc(P_{\rm PERM})\in \mathcal{O}(n^2)$.

\begin{restatable}{cor}{dstseparation}
\label{cor:mnnc_dst_lower_bound}
    Let $P_{{\rm DST}}$ be the directed spanning tree polytope on $n$ vertices.
    Then, it holds that $\nnc(P_{\rm DST})\in \mathcal{O}(n^3)$ and $\mnnc(P_{\rm DST})\in 2^{\Omega(n)}$.
\end{restatable}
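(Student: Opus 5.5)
The corollary has two parts. The lower bound will come from \Cref{thm:dst_lower_bound} by simulating ICNNs with STCs. The upper bound will come from an explicit polynomial-size (non-monotone) ReLU/maxout construction.

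For $\mnnc(P_{\rm DST})\in 2^{\Omega(n)}$, I use that $P_{\rm DST}\subseteq[0,1]^{E}$ lies in the nonnegative orthant. By the result of \citet{hertrich2024neural} quoted above, $\mnnc(P_{\rm DST})$ then equals the minimum size of a \emph{monotone} rank-$2$ maxout network computing $f_{P_{\rm DST}}$. I will translate such a network into an STC with only a constant-factor blowup. Each neuron computes $\max\{\ell_1,\ell_2\}$, where $\ell_j=\sum_u w_{j,u} y_u + b_j$ has nonnegative weights $w_{j,u}$. Each term $w_{j,u}y_u$ is a scalar gate (free), or is dropped if $w_{j,u}=0$. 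The sums cost plus gates, and the max is one max gate.

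Two points need care. First, the fan-in of the linear combinations makes the plus-gate count proportional to the number of edges rather than neurons. Second, there are constant biases. For the biases, I argue that they can be removed. Since $f_{P_{\rm DST}}$ is positively homogeneous, replacing the network $N$ by $\lim_{t\to\infty} N(tc)/t$ gives the same function. This limit is computed by the same network with all biases set to $0$, because each neuron's output is then positively homogeneous and the limit commutes with $\max$ and with positive combinations. Alternatively, I can cite the corresponding lemma in \cite{hertrich2024neural}, which likely already states that $\mnnc$ and STC size agree up to constant factors.

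For fan-in, I aim for a bound polynomial in the network size. Each neuron has at most $s+|E|$ inputs, so the resulting STC has $\size_+\le \mathrm{poly}(s,n)$. Then $\mathrm{poly}(s,n)\ge 2^{\Omega(n)}$ by \Cref{thm:dst_lower_bound}, which still forces $s\in 2^{\Omega(n)}$. This slack is what lets me ignore the fan-in issue entirely.

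For $\nnc(P_{\rm DST})\in\mathcal O(n^3)$, I implement a pure $(\max,+,-)$ algorithm of size $O(n^3)$ for the maximum weight arborescence. Every such gate is realized by $O(1)$ rank-$2$ maxout neurons, since non-monotone networks allow arbitrary affine maps. The natural candidate is the Chu--Liu/Edmonds algorithm. However, it makes data-dependent choices (cycle detection, contraction), which are hard to express as a fixed circuit.

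I would first look for an existing result, likely in \cite{hertrich2025relu} or \cite{fomin2016subtraction}, giving an $O(n^3)$ $(\max,+,-)$ circuit for the directed spanning tree problem, in analogy to the minimum spanning tree construction cited above. If none is available, I would build the network directly. One option is a Bellman--Ford-style dynamic program with subtraction used to "reduce" edge weights, which is the dual view of Edmonds' algorithm: subtract the maximum incoming edge weight at each node, then recurse on contracted structure.

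This second part is the main obstacle. Making contraction oblivious within $O(n^3)$ gates is nontrivial, and I would lean on the prior construction if one exists. The lower-bound half is routine given \Cref{thm:dst_lower_bound}.
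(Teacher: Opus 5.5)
Your lower-bound half is correct and follows the paper's route. You reduce a rank-$2$ maxout ICNN to a bias-free monotone network; your limit argument $\lim_{t\to\infty}N(tc)/t$ is a valid self-contained proof of \Cref{lem:bias_to_nonbias_for_pos_hom_functions} and leaves the nonnegative weights untouched. You then simulate the network by an STC with $\mathcal{O}(s^2+sn^2)$ gates as in \Cref{lem:mon_maxout_to_STC}, and apply \Cref{thm:dst_lower_bound}.

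The genuine gap is the upper bound $\nnc(P_{\rm DST})\in\mathcal{O}(n^3)$. You give no construction, only the hope that one exists in the literature. Your fallback, an oblivious Chu--Liu/Edmonds, is left open at exactly the hard step, since cycle detection and contraction are data-dependent. As written, this half of the corollary is not proved.

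The missing idea is to eliminate vertices in a fixed order instead of contracting cycles, and to get correctness from tropicalization rather than a combinatorial argument. \citet{fomin2016subtraction} give a \emph{subtraction-free} $(+,\times,/)$-circuit of size $\mathcal{O}(n^3)$ for the arborescence generating polynomial $\sum_{S\in\mathcal{A}_n}\prod_{(i,j)\in S}x_{ij}$, via the directed star-mesh transformation. To eliminate node $k$, one factors out its total out-weight $Y_k=\sum_j x_{kj}$ and replaces each remaining weight by $x_{ij}+x_{ik}x_{kj}/Y_k$.

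Because no subtraction occurs, replacing $+,\times,/$ by $\max,+,-$ yields a circuit computing $\max_{S\in\mathcal{A}_n}\sum_{(i,j)\in S}x_{ij}=f_{P_{\rm DST}}$ \citep[Proposition 4]{hertrich2026arithmetic}. To see why, substitute $x_{ij}=e^{tw_{ij}}$, apply $\tfrac1t\log$, and let $t\to\infty$. Every gate of a subtraction-free circuit then turns into its tropical counterpart, and the output tends to the maximum over monomials.

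Explicitly, start with $x^{(1)}=x$. At step $k$, set $y_k=\max_{j\in\{k+1,\dots,n\}}x^{(k)}_{kj}$ and $x^{(k+1)}_{ij}=\max\bigl(x^{(k)}_{ij},\,x^{(k)}_{ik}+x^{(k)}_{kj}-y_k\bigr)$ for distinct $i,j\in\{k+1,\dots,n\}$ with $i\neq n$. The output is $y_1+\dots+y_{n-1}$. This costs $\mathcal{O}(n^2)$ gates per step over $n-1$ steps, and each $(\max,+,-)$-gate needs only $\mathcal{O}(1)$ rank-$2$ maxout neurons, so the total is $\mathcal{O}(n^3)$.

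Your instinct to subtract the maximum arc weight at a node is exactly the local operation used here. In this paper's orientation it is the \emph{outgoing} arc, since arcs point toward the root. What was missing is pairing that subtraction with node elimination rather than cycle contraction, which makes the procedure oblivious.
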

This answers the above question and shows that maxout ICNNs must sometimes be exponentially larger than their unrestricted counterparts to express the same functions.

We obtain the corollaries by showing that STCs can simulate bias-free monotone maxout networks with small overhead and combining this with our lower bounds for STCs.
The upper bound in \Cref{cor:mnnc_dst_lower_bound} follows from tropicalizing a polynomial-size subtraction-free $(+,\times, /)$-circuit of \citep{fomin2016subtraction} computing the arithmetic version of the polynomial $f_{P_{\rm DST}}$, which gives a polynomial-size $(\max,+, -)$-circuit computing $f_{P_{\rm DST}}$.

\subsection{Further Related Work}
Our lower bounds are closely related to the literature on tropical and monotone arithmetic circuits.
For a survey on arithmetic circuit complexity, see \citep{shpilka2010arithmetic}.
\citep{jerrum1982some} related the tropical circuit complexity of homogeneous multilinear polynomials to monotone arithmetic circuit complexity and proved exponential lower bounds for several polynomials corresponding to combinatorial optimization problems, including the tropical permanent, TSP, and the maximum weight directed spanning tree problem.
Subsequent works used tropical circuits as a model for pure dynamic programming and proved lower bounds for various settings \citep{jukna2015lower,jukna2016tropical,jukna2019greedy,jukna2023tropical,kluk2025lower}.
Interestingly, while many of those lower bounds for tropical circuits are proved via the non-tropical counterparts, it is less meaningful to de-tropicalize scalar multiplication gates. The reason is that in the tropical world, the ``freshman's dream'' is true, that is, $(x_1\oplus x_2)^{\odot\alpha}=\alpha \max(x_1,x_2)=\max(\alpha x_1,\alpha x_2)=x_1^{\odot\alpha}\oplus x_2^{\odot\alpha}$ for $\alpha\in \R_{>0}$, while there is no natural way to distribute exponents over several additions in the non-tropical world for $(x_1+x_2)^\alpha$.

More generally, the effect of extending circuit models with additional operations has been studied, for instance, in the setting of adding subtraction or division to arithmetic $(+,\times)$-circuits \citep{valiant1980negation,fomin2016subtraction}, subtraction to tropical $(\max,+)$-circuits \citep{jukna2023tropical}, and negation to boolean $(\lor,\land)$-circuits \citep{razborov1985lower,alon1987monotone,tardos1988gap}.

Another line of work studies the expressivity of ReLU and maxout networks through the viewpoint of polyhedral and tropical geometry; see also the survey \citep{huchette2026deep}.
ReLU networks can be described as tropical rational functions \citep{zhang2018tropical}.
Every CPWL function can be exactly represented by a ReLU network \citep{arora2016understanding}, and it is a prominent open question if constant depth is sufficient \citep{hertrich2023towards,haase2023lower,averkov2025expressiveness,grillo2026depth,bakaev2026better}.
In contrast, there are families of functions that require monotone networks and ICNNs of unbounded depth \citep{valerdi2024minimal,bakaev2025depth}.
Every convex CPWL function can be computed by a ReLU ICNN \citep{chen2018optimal}.
However, there are convex monotone CPWL functions such as $\max\{x_1,\dots,x_n\}$ that cannot be computed or even be approximated by a monotone ReLU network \citep{mikulincer2024size,bakaev2025depth}.
In contrast, every convex monotone CPWL function can be computed by a monotone maxout network. Furthermore, ICNNs sometimes require strictly more depth than their unrestricted counterparts~\citep{gagneux2025convexity,bakaev2025depth}. In terms of size, $\mnnc(P)$ can be lower-bounded via $\xc(P)$, as discussed earlier. In addition, $\nnc(P)$ can be lower bounded by a stronger version of $\xc(P)$, called virtual extension complexity $\vxc(P)$, quantifying the minimum number of linear inequalities required to formulate $f_P$ as a difference of two linear programs~\citep{hertrich2024neural}.
The extension complexity was related to the size of arithmetic and tropical circuits in \citep{hrubevs2023shadows}.

Our paper studies lower bounds on the required size of ReLU networks.
Upper bounds on the required size of ReLU networks have been studied for the knapsack problem in~\citep{hertrich2023provably}, for maximum flows and maximum (undirected) spanning trees in \citep{hertrich2025relu}, and for regular matroids in \citep{hertrich2026arithmetic}.
In particular, the latter result gives a first example of polytopes $P_n$ where the best known upper bound on $\vxc(P_n)\in \mathcal{O}(n^3)$ is lower than the best known upper bound on $\xc(P_n)\in \mathcal{O}(n^6)$.

\subsection{Outlook}
We initiate the study of tropical circuits with scalar multiplication gates and prove in two concrete cases that such gates do not significantly increase the power of standard tropical circuits.
The natural next step would be to extend our results to more cases or even prove a generalization: are there any 0-1 polytopes for which STCs can be more efficient in terms of size than standard tropical circuits? One way towards resolving this question negatively would be to search for general conditions that allow to transfer lower bounds on tropical circuits to STCs.

A particularly intriguing open case is that of undirected spanning trees. Even for ordinary tropical circuits, it took several decades until the exponential lower bound in the directed case by \citet{jerrum1982some} was extended to the undirected case by \citet{jukna2019greedy}.
While we do not believe that polynomial-size STCs can compute the support function of the undirected spanning tree polytope, extending the lower bound by \citet{jukna2019greedy} to STCs seems difficult as their techniques seem to rely on combinatorial structure that does not necessarily persist when allowing scalar multiplications.

In the context of neural networks, our results imply an exponential separation between general networks and their monotone / input-convex counterparts for one concrete example, namely directed spanning trees. In the context of bipartite perfect matchings, we also obtain an exponential lower bound on $\mnnc(P_{\rm PERM})$, but it remains unknown whether $\nnc(P_{\rm PERM})$ is polynomial. Closely related, \citet[Section 6.5, Problem 3]{jukna2023tropical} explicitly states the open problem whether there are polynomial-size $(\max,+,-)$-circuits to compute $f_{P_{\rm PERM}}$. This is similarly in spirit to the famous open question whether there are (non-tropical) polynomial-size arithmetic circuits to compute the permanent, equivalently, whether $\mathrm{VP}=\mathrm{VNP}$, but it is more reasonable to expect a polynomial upper bound in the tropical case. An indication for this is that the tropical permanent can be computed in polynomial time via the Hungarian method, while evaluating the non-tropical permanent is $\mathrm{\#P}$-complete. However, polynomial-time computability does not imply the existence of polynomial-size $(\max,+,-)$-circuits or neural networks, compare~\citep{hertrich2025relu}, and it remains an open problem whether there exists a class of CPWL functions that is computable in polynomial time, but not representable with polynomial-size neural networks~\citep{hertrich2024neural}. A candidate for such a function class is the class of support functions of the (non-bipartite) matching polytope, as it has exponential extension complexity~\citep{rothvoss2017matching}, but can still be evaluated in polynomial time~\citep{edmonds1965paths}.

More generally, our first example of an exponential separation between $\mnnc$ and $\nnc$ opens up the broader mission to figure out whether this is a singular phenomenon or whether there are many functions for which monotone networks are significantly less powerful than non-monotone variants. In other words, what features make a function difficult to represent for monotone networks?

In this paper we focus on a real model of computation and require exact representations over real numbers. Especially in the practice of neural networks, approximate representations are often sufficient. It would therefore be interesting to study size lower bounds for approximating support functions with STCs.
Approximation lower bounds have been studied for tropical circuits in \citep{jukna2020approximation}, and for maxout ICNNs from the extension-complexity viewpoint in \citep{hertrich2024neural}.

Finally, while we prove exponential lower bounds for STCs, the constants in the exponents of our lower bounds are weaker than in the related lower bounds on tropical circuits. It would therefore be interesting to close this gap. More precisely, dropping polynomial factors, our lower bound for $P_{\rm PERM}$ is asymptotically $2^{cn}$ for $c\approx0.918$ compared to $2^n$ in \citep{jerrum1982some}; for $P_{\rm DST}$, our lower bound is $(9/7)^n$ compared to $(4/3)^n$ in \citep{jerrum1982some}.

\section{Preliminaries}
\paragraph{Notation.}
We define $\N=\{0,1,\dots\}$ as the set of natural numbers including zero.
For $n\in \N_{\geq 1}$, we write $[n]\coloneq \{1,\dots,n\}$.
We denote the standard basis vectors of $\R^d$ by $\{e_1,\dots, e_d\}$.
For two sets $P,Q\subset \R^d$, their \emph{Minkowski sum} is $P+Q=\{p+q: p\in P, q\in Q\}$.
For $z\in \R^d$, we write $\supp(z)=\{i\in [d]: z_i\neq 0\}$.
For $\lambda \in \R_{> 0}$ and $P\subseteq \R^d$, the \emph{dilation} of $P$ by $\lambda$ is $\lambda P\coloneq \{\lambda p: p\in P\}$.
A function $f:\R^d\to \R$ is \emph{monotone} if $f(x)\leq f(y)$ for all $x,y\in \R^d$ with $x\leq y$ component-wise.
It is \emph{positively homogeneous} if $f(\lambda x)=\lambda f(x)$ for all $\lambda\in \R_{\geq 0}$.

\paragraph{Polyhedra and Extension Complexity.}
A \emph{polyhedron} $P$ is the intersection of finitely many closed halfspaces $P=\{x\in \R^d: Ax \leq b\}$.
A \emph{face} of $P$ is either the empty set or the set of maximizers $\arg\max\{c^\top x: x\in P\}$ of a linear function over $P$.
Faces of dimension zero are called \emph{vertices}; faces of dimension $\dim(P)-1$ are called \emph{facets}.
A \emph{polytope} is a bounded polyhedron.
By the Minkowski-Weyl theorem, $P$ can be equivalently written as a convex hull of finitely many points.
The inclusion-wise minimal set $V$ with $\conv(V)=P$ is the set of vertices $V(P)$ of $P$.
The \emph{extension complexity} $\xc(P)$ of a polytope $P$ is the minimum number of facets of a polytope, possibly in a higher-dimensional space, that affinely projects to $P$.

\paragraph{Support Functions and Duality.}
For a polytope $P\subset \R^d$, the \emph{support function} of $P$ is $f_P:\R^d\to \R,\;c\mapsto \max_{x\in P}c^\top x$.
Support functions are convex, continuous piecewise-linear (CPWL), and positively homogeneous.
The support function $f_P$ is monotone if and only if $P\subset \R^d_{\geq 0}$.
Let $\mathcal{F}^d$ be the set of positively homogeneous convex CPWL functions from $\R^d$ to $\R$, and let $\mathcal{F}_+^d$ be the subset of monotone functions in $\mathcal{F}^d$.
There is a standard duality between $\mathcal{F}^d$ and the set $\mathcal{P}^d$ of polytopes in $\R^d$.
The map $\varphi:\mathcal{P}^d\to \mathcal{F}^d,\, P\mapsto f_P$ is a bijection satisfying $\varphi(P+Q)=f_{P+Q}=f_P+f_Q$, $\varphi(\conv(P\cup Q))=f_{\conv(P\cup Q)}=\max\{f_P,f_Q\}$, and $\varphi(\lambda P)=f_{\lambda P}=\lambda f_P$ for all $P,Q\in \mathcal{P}^d$ and all $\lambda\in \R_{\geq 0}$.
The inverse map $\varphi^{-1}:\mathcal{F}^d\to\mathcal{P}^d$ maps a function $f\in \mathcal{F}^d$ to the unique polytope whose support function is $f$.
We call this polytope the \emph{Newton Polytope} of $f$.
Restricting $\varphi$ gives a bijection between $\mathcal{F}_+^d$ and the set $\mathcal{P}_+^d$ of polytopes in $\R^d_{\geq 0}$.
See \Cref{fig:duality} for an illustration.

\begin{figure}
    \centering\resizebox{.5\linewidth}{!}{
	\begin{tikzpicture}[scale=0.7,every node/.style={transform shape}]
	\begin{scope}[shift={(-10,.2)}, scale=1.45, transform shape]
		\begin{axis}[
			anchor=center,
			view={-25}{40},
			xlabel={$x_1$},
			ylabel={$x_2$},
			xmin=-1.5, xmax=1.5,
			ymin=-1.5, ymax=1.5,
			zmin=-0.2, zmax=1.5,
			ticks=none,
			width=5cm,
			height=3.75cm,
			]
			\addplot3[draw=none, fill=OIblue!60] coordinates {
				(-1.5, 0, 0)
				(0, 0, 0)
				(1.5, 1.5, 1.5)
				(-1.5, 1.5, 1.5)
			};
			\addplot3[draw=none, fill=OIorange!60] coordinates {
				(-1.5, -1.5, 0)
				(0, -1.5, 0)
				(0, 0, 0)
				(-1.5, 0, 0)
			};
			\addplot3[draw=none, fill=OIgreen!60] coordinates {
				(0, -1.5, 0)
				(1.5, -1.5, 1.5)
				(1.5, 1.5, 1.5)
				(0, 0, 0)
			};
			\addplot3[black, line width=1.5pt] coordinates { (0,0,0) (1.5,1.5,1.5) };
			\addplot3[black, line width=1.5pt] coordinates { (0,0,0) (0,-1.5,0) };
			\addplot3[black, line width=1.5pt] coordinates { (0,0,0) (-1.5,0,0) };
		\end{axis}
		\node at (-.2,-1.5) {\large$\max\{\textcolor{OIorange!90!black}{0},\textcolor{OIgreen!90!black}{x_1},\textcolor{OIblue!90!black}{x_2}\}$};
		\node at (-.3,-.7) {\large$\textcolor{OIorange!90!black}{0}$};
		\node at (.85,.15) {\large$\textcolor{OIgreen!90!black}{x_1}$};
		\node at (-.6,.15) {\large$\textcolor{OIblue!90!black}{x_2}$};
	\end{scope}
	\draw[->, thick, bend right=20] (-2.4,1) to node[above, yshift=.1cm] {\Large Support Function} (-6.2,1);
	\draw[->, thick, bend right=20] (-6.2,-1) to node[below, yshift=-.1cm] {\Large Newton Polytope} (-2.4,-1);
	\node at (-4.3,0) {\LARGE \textbf{Duality}};
	\begin{scope}[scale=1.25, transform shape, shift={(0.5,0)}]
		\begin{axis}[
			x=1cm, y=1cm,
			xmin=-2, xmax=2,
			ymin=-2, ymax=2,
			hide axis,
			anchor=origin,
			at={(0,0)}
			]
		\end{axis}
	\node[draw, color=OIorange, circle, fill=OIorange, inner sep=2, label=270:{\large\textcolor{OIorange!90!black}{(0,0)}}] (0) at (-1,-1) {};
	\node[draw, color=OIblue, circle, fill=OIblue, inner sep=2, label=90:{\large\textcolor{OIblue!90!black}{(0,1)}}] (2) at (-1,1) {};
	\node[draw, color=OIgreen, circle, fill=OIgreen, inner sep=2, label=270:{\large\textcolor{OIgreen!90!black}{(1,0)}}] (1) at (1,-1) {};
	\draw[thick, color=black] (0) -- (1) -- (2) -- (0);
	\end{scope}
	\node at (-9.75,2.5) {\Large \textbf{CPWL Functions}};
	\node at (0.25,2.5) {\Large \textbf{Polytopes}};
    \end{tikzpicture}}
    \caption{Illustration of the duality between a CPWL function and a polytope.}
    \label{fig:duality}
\end{figure}
\paragraph{Maxout Networks.}
We use the definition of maxout networks from \citep{hertrich2024neural}.
A \emph{rank-$k$ maxout neural network}, where $k\geq 2$, is given by a directed acyclic graph $(V, A)$.
The $d\geq 1$ nodes of in-degree zero are called \emph{input neurons}; all other $s\geq 1$ nodes are \emph{maxout neurons}.
We assume that among the maxout neurons there is a unique \emph{output neuron} of out-degree zero.
Each node $v$ computes a function $z_v:\mathbb{R}^d\to \mathbb{R}$.
The $i$-th input node computes $z_v(x)=x_i$.
Each maxout neuron $v$ computes the maximum of $k$ affine functions of the outputs of its incoming neighbors $\delta^-_v$:
\[
z_v(x)=\max_{i=1,\dots, k}\left(\sum_{u\in \delta^{-}_v}w^i_{uv}z_u(x)+b^i_v\right),
\]
where $w^i_{uv}\in \mathbb{R}$ for $i=1,\dots, k$ are \emph{weights} of the arc $uv\in A$ and $b^i_v$ for $i=1,\dots, k$ are \emph{biases} of the node $v\in V$.
Maxout networks contain ReLU activations as a special case.
The \emph{size} of a maxout network is the number of maxout neurons.
A maxout network is \emph{monotone} if all weights are nonnegative.
It is \emph{input-convex} if negative weights are allowed only on arcs leaving an input neuron.
Monotone maxout networks compute monotone and convex CPWL functions; maxout ICNNs compute convex CPWL functions.
For a polytope $P$, the \emph{neural network complexity} $\nnc(P)$ is the minimum size of any rank-2 maxout network computing $f_P$.
The \emph{monotone neural network complexity} $\mnnc(P)$ is the minimum size of a rank-2 maxout ICNN computing $f_P$.
If $P\in\mathcal{P}_+^d$, then this coincides with the minimum size of a monotone rank-2 maxout network computing $f_P$, see \Cref{lem:ICNN_to_mon_for_mon_functions} below.

\paragraph{Tropical Circuits and Tropical Polynomials.}
A \emph{tropical circuit} $\Phi$ is a directed acyclic graph, where parallel arcs are allowed.
Each node, also called a \emph{gate}, is of one of the following types.
A gate of in-degree zero is an \emph{input gate} and holds either one of the variables $x_1,\dots, x_d$ or the constant $0$.
Every other gate has in-degree two and computes either the maximum or the sum of the values computed at its two predecessor gates.
We assume that there is a unique \emph{output gate} of out-degree zero.
The function computed at the output gate is denoted by $f_\Phi:\R^d\to \R$.
The \emph{size} of $\Phi$ is the number of non-input gates.
See \Cref{fig:tropical_circuit_example} for an illustration.

Every tropical circuit computes a \emph{tropical polynomial} $f_\Phi(x)=\max_{a\in A}a^\top x$ for some finite set $A\subseteq \N^d$.
Conversely, every function of this form can be computed by a tropical circuit.
Thus, tropical circuits compute exactly those functions in $\mathcal{F}_+^d$ whose Newton polytopes are of the form $\conv(A)$ with $A\subseteq \N^d$ finite.
A term $a^\top x$ for $a\in A$ is called a \emph{tropical monomial}.
The \emph{degree} of $a\in \N^d$ is $a_1+\dots+a_d$.
A tropical polynomial is \emph{homogeneous} if all its monomials have the same degree, and it is \emph{multilinear} if all coefficient vectors of monomials lie in $\{0,1\}^d$.

\begin{figure}
\centering

\begin{subfigure}{0.27\textwidth}
\centering
\resizebox{\linewidth}{!}{
\begin{tikzpicture}[>=stealth]
\tikzset{
    inp/.style={draw,circle,minimum size=8mm,inner sep=0pt,font=\normalsize},
    plus/.style={draw,circle,minimum size=8mm,inner sep=0pt,font=\small, fill=OIgreen!30},
    maxg/.style={draw,circle,minimum width=8mm,inner sep=0pt,font=\small, fill=OIblue!30},
    expr/.style={font=\scriptsize, align=center}
}

\node[inp] (x1) at (1,0) {$x_1$};
\node[inp] (x2) at (3,0) {$x_2$};
\node[inp] (x3) at (5,0) {$x_3$};

\node[plus] (a1) at (.5,-1.3) {$+$};      
\node[plus] (a2) at (1.5,-1.3) {$+$};      

\node[plus] (b1) at (3,-1.3) {$+$};      
\node[plus] (c1) at (5,-1.3) {$+$};      

\node[plus] (s1) at (1.5,-2.6) {$+$};    
\node[plus] (s2) at (3,-2.6) {$+$};    
\node[plus] (s3) at (5,-2.6) {$+$};    

\node[maxg] (m1) at (2.5,-3.9) {$\max$};
\node[maxg] (m2) at (4,-3.9) {$\max$};

\node[plus] (f) at (6,-3.9) {$+$};

\draw[->] (x1) to[out=240,in=80] (a1);
\draw[->] (x1) to[out=260,in=60] (a1);
\draw[->] (a1) -- (a2);
\draw[->] (x1) -- (a2);

\draw[->] (x2) to[out=260,in=100] (b1);
\draw[->] (x2) to[out=280,in=80] (b1);

\draw[->] (x3) to[out=260,in=100] (c1);
\draw[->] (x3) to[out=280,in=80] (c1);

\draw[->] (a2) -- (s1);
\draw[->] (b1) -- (s1);

\draw[->] (x1) -- (s2);
\draw[->] (c1) -- (s2);

\draw[->] (b1) -- (s3);
\draw[->] (c1) -- (s3);

\draw[->] (s1) -- (m1);
\draw[->] (s2) -- (m1);
\draw[->] (m1) -- (m2);
\draw[->] (s3) -- (m2);
\draw[->] (m2) -- (f);
\draw[->] (x3) to[out=290,in=90] (6,-2) -- (f);
\end{tikzpicture}}
\caption{Tropical circuit of size 10.}
\label{fig:tropical_circuit_example}
\end{subfigure}
\hfill
\begin{subfigure}{0.27\textwidth}
\centering
\resizebox{\linewidth}{!}{
\begin{tikzpicture}[>=stealth]
\tikzset{
    inp/.style={draw,circle,minimum size=8mm,inner sep=0pt,font=\normalsize},
    scalar/.style={draw,circle,minimum size=8mm,inner sep=0pt,font=\small, fill=OIorange!35},
    plus/.style={draw,circle,minimum size=8mm,inner sep=0pt,font=\small, fill=OIgreen!30},
    maxg/.style={draw,circle,minimum width=8mm,inner sep=0pt,font=\small, fill=OIblue!30},
    expr/.style={font=\scriptsize, align=center}
}

\node[inp] (x1) at (1,0) {$x_1$};
\node[inp] (x2) at (3,0) {$x_2$};
\node[inp] (x3) at (5,0) {$x_3$};

\node[scalar] (s31) at (1,-1.3) {$3$};
\node[scalar] (s22) at (3,-1.3) {$2$};
\node[scalar] (s23) at (5,-1.3) {$2$};

\node[plus] (p1) at (1.5,-2.6) {$+$};    
\node[plus] (p2) at (3,-2.6) {$+$};    
\node[plus] (p3) at (4.5,-2.6) {$+$};    

\node[maxg] (m1) at (2.5,-3.9) {$\max$};
\node[maxg] (m2) at (4,-3.9) {$\max$};

\node[plus] (f) at (6,-3.9) {$+$};

\draw[->] (x1) -- (s31);
\draw[->] (x2) -- (s22);
\draw[->] (x3) -- (s23);

\draw[->] (s31) -- (p1);
\draw[->] (s22) -- (p1);

\draw[->] (x1) -- (p2);
\draw[->] (s23) -- (p2);

\draw[->] (s22) -- (p3);
\draw[->] (s23) -- (p3);

\draw[->] (p1) -- (m1);
\draw[->] (p2) -- (m1);
\draw[->] (m1) -- (m2);
\draw[->] (p3) -- (m2);
\draw[->] (m2) -- (f);
\draw[->] (x3) to[out=290,in=90] (6,-2) -- (f);
\end{tikzpicture}}
\caption{STC of size 6.}
\label{fig:stc_example}
\end{subfigure}
\hfill
\begin{subfigure}{0.27\textwidth}
\centering
\resizebox{\linewidth}{!}{
\begin{tikzpicture}[>=stealth]
\tikzset{
    inp/.style={draw,circle,minimum size=8mm,inner sep=0pt,font=\normalsize},
    scalar/.style={draw,circle,minimum size=8mm,inner sep=0pt,font=\small, fill=OIorange!35},
    plus/.style={draw,circle,minimum size=8mm,inner sep=0pt,font=\small, fill=OIgreen!30},
    maxg/.style={draw,circle,minimum width=8mm,inner sep=0pt,font=\small, fill=OIblue!30},
    expr/.style={font=\scriptsize, align=center}
}

\node[inp] (x1) at (1,0) {$x_1$};
\node[inp] (x2) at (3,0) {$x_2$};
\node[inp] (x3) at (5,0) {$x_3$};

\node[scalar] (s31) at (1,-1.3) {$3$};
\node[scalar] (s22) at (3,-1.3) {$2$};

\node[plus] (p1) at (1.5,-2.6) {$+$};    

\node[maxg] (m1) at (2.5,-3.9) {$\max$};
\node[maxg] (m2) at (4,-3.9) {$\max$};

\node[plus] (f) at (6,-3.9) {$+$};

\draw[->] (x1) -- (s31);
\draw[->] (x2) -- (s22);

\draw[->] (s31) -- (p1);
\draw[->] (s22) -- (p1);

\draw[->] (p1) -- (m1);
\draw[->] (m1) -- (m2);
\draw[->] (m2) -- (f);
\draw[->] (x3) to[out=290,in=90] (6,-2) -- (f);
\end{tikzpicture}}
\caption{A trace of the STC in (b).}
\label{fig:trace_example}
\end{subfigure}
\caption{Circuits in (a) and (b) compute $\max(3x_1+2x_2+x_3,x_1+3x_3,2x_2+3x_3)$.
The numbers in the orange scalar gates are the scalars.
The trace in (c) corresponds to the monomial $3x_1+2x_2+x_3$.
}
\label{fig:two_circuits}
\end{figure}

\paragraph{Scalar Tropical Circuits.}
A \emph{scalar tropical circuit} (STC) $\Phi$ is a tropical circuit with one additional type of gate.
A \emph{scalar gate} has in-degree one, is associated with a scalar $\lambda\in \mathbb{R}_{>0}$, and maps an input value $z$ to $\lambda z$.
For an STC $\Phi$, its size $\size(\Phi)$ is the total number of $\max$ and $+$ gates.
Scalar gates are not counted in $\size(\Phi)$.
We also write $\size_+(\Phi)$ for the number of $+$ gates.
See \Cref{fig:stc_example} for an example.

The duality between polytopes and support functions gives a useful geometric interpretation of STCs.
For every gate $v$, let $f_v:\R^d\to \R$ be the function computed at $v$.
We associate with $v$ a polytope $P_v\subset\R^d_{\geq 0}$ such that $f_v=f_{P_v}$.
If $v$ is an input gate holding $x_i$, then $P_v=\{e_i\}$.
If $v$ is an input gate holding $0$, then $P_v= \{0\}$.
If $v$ is a $\max$ gate with predecessor gates $u,w$ then $P_v= \conv(P_u\cup P_w)$.
If $v$ is a $+$ gate with predecessor gates $u,w$, then $P_v= P_u+P_w$.
If $v$ is a scalar gate with predecessor $u$ and scalar $\lambda\in \mathbb{R}_{>0}$, then $P_v= \lambda P_u$.
We denote the polytope associated with the output gate by $P_\Phi$.
Then, $\Phi$ computes the support function $f_{P_\Phi}$.
This interpretation is illustrated in \Cref{fig:polytopal_viewpoint}.

It follows that STCs compute functions in $\mathcal{F}^d_+$.
Conversely, every function in $\mathcal{F}^d_+$ can be computed by a STC: each vertex of a polytope in $\mathcal{P}^d_+$ can be built from dilations of standard basis vectors and the zero vector using $+$ gates, and the convex hull of these vertices can then be built using $\max$ gates.
Thus, one can equivalently view an STC as a circuit whose input gates hold the sets $\{0\},\{e_1\},\dots,\{e_d\}$ and whose non-input gates compute binary Minkowski sums, binary convex hulls of unions, and dilations.
We will switch between these views without further comment.
This interpretation generalizes \emph{Minkowski circuits} as defined by \citet{jukna2016tropical}, which underlie several lower bound results for $(+,\times)$ arithmetic and tropical circuits; see also the discussion in \citep[p. 2065]{jukna2016tropical}.
Related polytopal viewpoints have also been used in the study of ReLU and maxout networks, for example in \citep{hertrich2023towards,balakin2025maxout}.

\begin{figure}
    \centering
    \resizebox{\linewidth}{!}{
    \begin{tikzpicture}[>=stealth]
    \tikzset{
        nodescale/.style={scale=0.9, transform shape},
        inp/.style={nodescale,draw,circle,minimum size=7mm,inner sep=0pt,font=\footnotesize},
        scalar/.style={nodescale,draw,circle,minimum size=7mm,inner sep=0pt,font=\scriptsize, fill=OIorange!35},
        plus/.style={nodescale,draw,circle,minimum size=7mm,inner sep=0pt,font=\scriptsize, fill=OIgreen!30},
        maxg/.style={nodescale,draw,circle,minimum size=7mm,inner sep=0pt,font=\scriptsize, fill=OIblue!30},
        expr/.style={font=\scriptsize, align=center},
        coord/.style={font=\fontsize{5}{6}\selectfont}
    }
    
    \node[inp] (x1) at (0,4.5) {$x_1$};
    \node[inp] (x2) at (0,3.5) {$x_2$};
    
    \node[plus] (a1) at (1.75,4) {$+$};             
    \node[maxg] (a2) at (3.5,4.5) {$\max$};      
    \node[maxg] (a3) at (5.25,3.5) {$\max$};        
    \node[plus] (a4) at (7,4) {$+$};           
    \node[maxg] (a5) at (8.75,4.5) {$\max$};        
    \node[maxg] (a6) at (11,3.5) {$\max$};      
    \node[scalar] (a7) at (13.4,4) {$2$};      
    
    \draw[->] (x1) -- (a1);
    \draw[->] (x2) -- (a1);
    \draw[->] (x1) -- (a2);
    \draw[->] (a1) -- (a2);
    \draw[->] (x2) -- (a3);
    \draw[->] (a2) -- (a3);
    \draw[->] (a1) -- (a4);
    \draw[->] (a3) -- (a4);
    \draw[->] (x1) to[out=9,in=-189] (a5);
    \draw[->] (a4) -- (a5);
    \draw[->] (x2) to[out=-7,in=-173] (a6);
    \draw[->] (a5) -- (a6);
    \draw[->] (a6) -- (a7);

    \begin{scope}[shift={(0,0.3)},scale=0.95,transform shape]
        \draw[black!20,->] (-.5,1.5) -- (.5,1.5);
        \draw[black!20,->] (-.5,1.5) -- (-.5,2.5);
        \draw[fill=black] (-.5,1.9) circle (.02cm) node[left,coord] {$(0,1)$};
        \draw[fill=black] (-.1,1.5) circle (.02cm) node[below,coord] {$(1,0)$};
    \end{scope}
    \begin{scope}[shift={(1.75,0.3)},scale=0.95,transform shape]
        \draw[black!20,->] (-.5,1.5) -- (.5,1.5);
        \draw[black!20,->] (-.5,1.5) -- (-.5,2.5);
        \draw[fill=black] (-.1,1.9) circle (.02cm) node[above,coord] {$(1,1)$};
    \end{scope}
    \begin{scope}[shift={(3.5,0.3)},scale=0.95,transform shape]
        \draw[black!20,->] (-.5,1.5) -- (.5,1.5);
        \draw[black!20,->] (-.5,1.5) -- (-.5,2.5);
        \draw[fill=black] (-.1,1.5) circle (.02cm) node[below,coord] {$(1,0)$};
        \draw[fill=black] (-.1,1.9) circle (.02cm) node[above,coord] {$(1,1)$};
        \draw[black] (-.1,1.5) -- (-.1,1.9);
    \end{scope}
    \begin{scope}[shift={(5.25,0.3)},scale=0.95,transform shape]
        \draw[black!20,->] (-.5,1.5) -- (.5,1.5);
        \draw[black!20,->] (-.5,1.5) -- (-.5,2.5);
        \draw[fill=black] (-.1,1.5) circle (.02cm) node[below,coord] {$(1,0)$};
        \draw[fill=black] (-.5,1.9) circle (.02cm) node[left,coord] {$(0,1)$};
        \draw[fill=black] (-.1,1.9) circle (.02cm) node[above,coord] {$(1,1)$};
        \draw[black, fill=gray!60] (-.1,1.5) -- (-.1,1.9) -- (-.5,1.9) -- cycle;
    \end{scope}
    \begin{scope}[shift={(7,0.3)},scale=0.95,transform shape]
        \draw[black!20,->] (-.5,1.5) -- (.5,1.5);
        \draw[black!20,->] (-.5,1.5) -- (-.5,2.5);
        \draw[fill=black] (.3,1.9) circle (.02cm) node[right,coord] {$(2,1)$};
        \draw[fill=black] (-.1,2.3) circle (.02cm) node[left,coord] {$(1,2)$};
        \draw[fill=black] (.3,2.3) circle (.02cm) node[above,coord] {$(2,2)$};
        \draw[black, fill=gray!60] (.3,1.9) -- (.3,2.3) -- (-.1,2.3) -- cycle;
    \end{scope}
    \begin{scope}[shift={(8.75,0.3)},scale=0.95,transform shape]
        \draw[black!20,->] (-.5,1.5) -- (.5,1.5);
        \draw[black!20,->] (-.5,1.5) -- (-.5,2.5);
        \draw[fill=black] (-.1,1.5) circle (.02cm) node[below,coord] {$(1,0)$};
        \draw[fill=black] (.3,1.9) circle (.02cm) node[right,coord] {$(2,1)$};
        \draw[fill=black] (-.1,2.3) circle (.02cm) node[left,coord] {$(1,2)$};
        \draw[fill=black] (.3,2.3) circle (.02cm) node[above,coord] {$(2,2)$};
        \draw[black, fill=gray!60] (.3,1.9) -- (.3,2.3) -- (-.1,2.3) -- cycle;
        \draw[black, fill=gray!40] (.3,1.9) -- (-.1,1.5) -- (-.1,2.3) -- cycle;
    \end{scope}
    \begin{scope}[shift={(11,0.3)},scale=0.95,transform shape]
        \draw[black!20,->] (-.5,1.5) -- (.5,1.5);
        \draw[black!20,->] (-.5,1.5) -- (-.5,2.5);
        \draw[fill=black] (-.1,1.5) circle (.02cm) node[below,coord] {$(1,0)$};
        \draw[fill=black] (-.5,1.9) circle (.02cm) node[left,coord] {$(0,1)$};
        \draw[fill=black] (.3,1.9) circle (.02cm) node[right,coord] {$(2,1)$};
        \draw[fill=black] (-.1,2.3) circle (.02cm) node[left,coord] {$(1,2)$};
        \draw[fill=black] (.3,2.3) circle (.02cm) node[above,coord] {$(2,2)$};
        \draw[black, fill=gray!60] (.3,1.9) -- (.3,2.3) -- (-.1,2.3) -- cycle;
        \draw[black, fill=gray!40] (.3,1.9) -- (-.1,1.5) -- (-.1,2.3) -- cycle;
        \draw[black, fill=gray!20] (-.5,1.9) -- (-.1,1.5) -- (-.1,2.3) -- cycle;
    \end{scope}
    \begin{scope}[shift={(13.25,0.3)},scale=0.95,transform shape]
        \draw[black!20,->] (-.5,1.5) -- (1.1,1.5);
        \draw[black!20,->] (-.5,1.5) -- (-.5,3.1);
        \draw[fill=black] (.2,1.5) circle (.02cm) node[below,coord] {$(2,0)$};
        \draw[fill=black] (-.5,2.2) circle (.02cm) node[left,coord] {$(0,2)$};
        \draw[fill=black] (.9,2.2) circle (.02cm) node[right,coord] {$(4,2)$};
        \draw[fill=black] (.2,2.9) circle (.02cm) node[left,coord] {$(2,4)$};
        \draw[fill=black] (.9,2.9) circle (.02cm) node[above,coord] {$(4,4)$};
        \draw[black, fill=gray!60] (.9,2.2) -- (.9,2.9) -- (.2,2.9) -- cycle;
        \draw[black, fill=gray!40] (.9,2.2) -- (.2,1.5) -- (.2,2.9) -- cycle;
        \draw[black, fill=gray!20] (-.5,2.2) -- (.2,1.5) -- (.2,2.9) -- cycle;
    \end{scope}
    \end{tikzpicture}}
    \caption{
    Polytopal interpretation of an STC.
    Below each gate $v$ is the corresponding polytope $P_v$.
    }
    \label{fig:polytopal_viewpoint}
\end{figure}

\section{A General Lower Bound Strategy}
In this section, we will develop the necessary prerequisites that allow us to reduce the problem of finding lower bounds to a more combinatorial / polytopal problem.
We will use the following two lemmas to transfer lower bounds for STCs to lower bounds for maxout ICNNs.
\begin{lemma}[\text{\citep[Proposition 2.3]{hertrich2023towards}}]\label{lem:bias_to_nonbias_for_pos_hom_functions}
    If a rank-$k$ maxout network computes a positively homogeneous function $f$, then the same network without biases also computes $f$.
\end{lemma}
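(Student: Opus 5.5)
The plan is to show by induction over a topological order of the network that every neuron of the bias-free network computes the \emph{positively homogeneous part} of what it computed in the original network. That part is the function $g(x)=\lim_{t\to\infty} f(tx)/t$, the recession (or asymptotic) function of a CPWL function $f$. For each node $v$ of the original network let $z_v$ denote its function, and let $\tilde z_v$ denote the function it computes once all biases $b^i_v$ are set to zero. The claim is that
\[
\tilde z_v(x)=\lim_{t\to\infty}\frac{z_v(tx)}{t}\qquad\text{for all }x\in\R^d .
\]
Once this is established, we conclude using the homogeneity of $f$. At the output neuron, $f(tx)/t=f(x)$ for all $t>0$, so the limit equals $f(x)$, and the bias-free network computes $f$.

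For input neurons the claim is immediate, since $z_v(tx)/t=x_i=\tilde z_v(x)$. For a maxout neuron, assume the claim holds for all predecessors $u$. We use that each $z_u$ is CPWL, because compositions of affine maps and maxima are CPWL. Consequently $h_u(x)\coloneq\lim_{t\to\infty} z_u(tx)/t$ exists, and $z_u(tx)=t\,h_u(x)+O(1)$ as $t\to\infty$ for each fixed $x$. Indeed, along a ray $t\mapsto tx$ a CPWL function is eventually affine in $t$, since it has finitely many pieces and the ray eventually stays in one piece. Dividing the defining formula by $t$ gives
\[
\frac{z_v(tx)}{t}=\max_{i}\Bigl(\sum_{u}w^i_{uv}\frac{z_u(tx)}{t}+\frac{b^i_v}{t}\Bigr).
\]
The bias terms vanish as $t\to\infty$. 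The inner expressions converge to $\sum_u w^i_{uv}h_u(x)$. Since the maximum of finitely many terms is continuous, the limit equals $\max_i\sum_u w^i_{uv}\tilde z_u(x)=\tilde z_v(x)$ by the induction hypothesis, which completes the induction.

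The main point needing care is the existence of the limit along rays, that is, that $z_u(tx)/t$ actually converges. I would justify it directly from the eventual affinity of CPWL functions along rays, as described above. Alternatively, existence follows inductively from the same computation: the limits exist at inputs, and limits of maxima of linear combinations of convergent quantities exist. This second route even avoids the CPWL structure entirely and makes the argument self-contained. Note also that weights may be arbitrary real numbers, including negative ones. This causes no difficulty, because linear combinations and finite maxima both commute with limits. In particular, the statement holds for general, monotone, and input-convex maxout networks alike, since removing biases does not change any weight sign.
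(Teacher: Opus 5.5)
Your proof is correct. The induction shows that each bias-free neuron computes $x\mapsto\lim_{t\to\infty} z_v(tx)/t$, and you obtain existence of the limit inductively, so the CPWL detour is not even needed. Combined with $f(tx)/t=f(x)$ at the output, this is complete. The paper gives no proof of its own; it only cites \citep[Proposition 2.3]{hertrich2023towards}. Your argument, in which the biases become negligible once the input is rescaled by a large factor, is the natural standard argument for this statement.
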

\begin{lemma}[\text{\citep[Proposition 3.2]{hertrich2024neural}}]\label{lem:ICNN_to_mon_for_mon_functions}
    If a rank-$k$ maxout ICNN computes a monotone function $f$, then there is a monotone maxout network of the same size that computes $f$.
\end{lemma}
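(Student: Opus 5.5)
The plan is to reduce to the positively homogeneous case and then argue on the polytope side of the duality from the preliminaries, where an ICNN becomes a circuit of Minkowski sums, dilations and convex hulls.

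\emph{Homogenization.} Given a rank-$k$ maxout ICNN $N$ computing a monotone $f$, I would add a fresh input $s$ and turn every bias $b^i_v$ into a weight $b^i_v$ on an arc from $s$ to $v$. Arcs leaving inputs may carry any sign in an ICNN, so this is again an ICNN of the same size. It computes the homogenization $\tilde f(x,s)=s f(x/s)$ for $s>0$, and $f$ is recovered by setting $s=1$. Since $\tilde f$ is positively homogeneous, \Cref{lem:bias_to_nonbias_for_pos_hom_functions} lets me work with bias-free networks throughout. Every neuron $v$ then computes a convex, positively homogeneous CPWL function, namely the support function of a polytope $Q_v\subset\R^{d+1}$, and these polytopes satisfy
\[
Q_v=\conv\Bigl(\textstyle\bigcup_{i=1}^k\bigl(\{w^i_{\mathrm{in},v}\}+\sum_{u}w^i_{uv}Q_u\bigr)\Bigr),
\]
with $w^i_{uv}\ge 0$ for all hidden neurons $u$. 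Here $w^i_{\mathrm{in},v}$ is the vector of the (arbitrary-sign) weights on arcs from the inputs into $v$.

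\emph{Constructing the monotone network.} Monotonicity of $f$ means that the relevant part of the output polytope $Q_{\mathrm{out}}$ lies in the nonnegative orthant in the $x$-coordinates. I would define a coordinatewise correction: a monotone network $N'$ on the same graph, whose input weights are obtained from those of $N$ by replacing each negative $x$-weight by $0$ and keeping the $s$-weights. The aim is to compare the two families of polytopes $Q_v$ and $Q'_v$. An induction over a topological order should give $f_{Q'_v}\ge f_{Q_v}$ on the nonnegative orthant. To match the output exactly I would then exploit that every vertex of $Q_{\mathrm{out}}$ has nonnegative $x$-part.

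\emph{Expected obstacle.} The hardest step is the equality $f_{N'}=f$. Truncating weights does not commute with Minkowski sums, because $(a+b)^+\neq a^++b^+$. So the correction has to be organized at the level of traces: each vertex of $Q_{\mathrm{out}}$ is a sum along a trace of scaled input vectors, and negative contributions in one coordinate must cancel against positive ones elsewhere in the same trace. If naive truncation fails, I would instead try an affine change of the input coordinates. For each input $x_j$, the negative parts of its outgoing weights could be moved into weights on an auxiliary arc, and then absorbed by the nonnegative hidden-to-hidden weights, using that the final gradient in $x_j$ is nonnegative. Either way, the key identity to check is that the modified network agrees with $f$ on the nonnegative orthant. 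By positive homogeneity and monotonicity this should propagate to all of $\R^d$, and I expect this to be the delicate part of the argument.
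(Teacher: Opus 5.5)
\textbf{Assessment: the proposal has a genuine gap.} Your homogenization and the recursion for the polytopes $Q_v$ are correct. The appeal to \Cref{lem:bias_to_nonbias_for_pos_hom_functions} is superfluous, since the homogenized network has no biases. However, the proposal never produces a construction that works, and two of its concrete steps fail.

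\emph{Truncation cannot be repaired afterwards.} For $d=1$, take a hidden neuron $z_u(x)=\max\{-x,-x\}=-x$ and the output $z_{\mathrm{out}}(x)=\max\{z_u(x)+2x,\,0\}=\max\{x,0\}$, which is monotone. Truncation yields $z'_u=0$ and $z'_{\mathrm{out}}=\max\{2x,0\}$, which is already wrong at $x=1$. So the inequality $f_{Q'_v}\ge f_{Q_v}$ can be strict at the output even on the nonnegative orthant. No appeal to the vertices of $Q_{\mathrm{out}}$ can turn it into an equality.

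\emph{The propagation claim is false.} Agreement with $f$ on $\R^d_{\ge 0}$ does not propagate to $\R^d$ via positive homogeneity and monotonicity. The maps $x\mapsto x_1$ and $x\mapsto \max\{x_1,0\}$ are both monotone, convex and positively homogeneous, agree on $\R^d_{\ge 0}$, and differ when $x_1<0$. On $\R^d_{\ge 0}$ a support function $f_P$ only sees $P+\R^d_{\le 0}$.

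\emph{The fallback is not specified.} Moving negative input weights downstream does not say how much to move where. In general, even a piece with no negative weight must receive a positive correction: $\max\{-x,0\}$ has to become $\max\{0,x\}$.

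\textbf{The missing idea} is a neuron-dependent linear shift. For every hidden neuron $v$, let $\ell_v\in\R^d$ be the coordinatewise minimum of the $x$-parts of the points of $Q_v$. Set $\ell_{\mathrm{out}}:=0$, and write $c^i_v$ for the $x$-part of $w^i_{\mathrm{in},v}$. In the original network (with biases $b^i_v$) one has the exact identity
\[
z_v(x)-\ell_v^\top x=\max_{i}\Bigl(\sum_{u} w^i_{uv}\bigl(z_u(x)-\ell_u^\top x\bigr)+\bigl(c^i_v+\sum_{u} w^i_{uv}\ell_u-\ell_v\bigr)^\top x+b^i_v\Bigr),
\]
where $u$ ranges over the hidden predecessors of $v$.

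Now keep all hidden-to-hidden weights and biases, and replace the input weights by $c^i_v+\sum_u w^i_{uv}\ell_u-\ell_v$. By induction over a topological order, this gives a network of the same size and rank in which neuron $v$ computes $z_v-\ell_v^\top x$. In particular, its output is exactly $f$ on all of $\R^d$.

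The new weights are nonnegative:
\begin{itemize}
\item For a hidden neuron: since $w^i_{uv}\ge 0$, the vector $c^i_v+\sum_u w^i_{uv}\ell_u$ is the coordinatewise minimum of the $x$-parts of $\{w^i_{\mathrm{in},v}\}+\sum_u w^i_{uv}Q_u\subseteq Q_v$. Hence it is at least $\ell_v$.
\item At the output: the same vector is at least $0$, because every point $(a,\sigma)\in Q_{\mathrm{out}}$ satisfies $a\ge 0$. Indeed, if $a_j<0$, then $f(-te_j)\ge -ta_j+\sigma\to\infty$ as $t\to\infty$, contradicting $f(-te_j)\le f(0)$.
\end{itemize}
This last step is the only place where monotonicity of $f$ is used, and it is exactly where your observation about $Q_{\mathrm{out}}$ belongs. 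In the example above, the shift turns $z_u=-x$ into $0$ and the output weight $2$ into $1$, recovering $\max\{x,0\}$.
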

For polytopes $P\in\mathcal{P}^d_+$ the support function $f_P$ is monotone and positively homogeneous.
Therefore, a lower bound for bias-free monotone maxout networks implies the same lower bound for maxout ICNNs computing $f_P$.
It remains to connect monotone maxout networks to STCs.
\begin{lemma}\label{lem:mon_maxout_to_STC}
    If a bias-free monotone rank-$k$ maxout network of size $s$ computes a function $f:\R^d\to \R$, then there is an STC $\Phi$ with $\size(\Phi)\leq ks^2+ksd$ that computes the same function $f$.
\end{lemma}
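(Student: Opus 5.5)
We simulate the network neuron by neuron, following a topological order of the underlying DAG, and replace each maxout neuron by a small STC gadget. Since the network is bias-free and monotone, every maxout neuron $v$ computes
\[
z_v(x)=\max_{i=1,\dots,k}\sum_{u\in\delta^-_v} w^i_{uv} z_u(x),\qquad w^i_{uv}\geq 0 .
\]
The predecessors $u$ are either input neurons, whose values are the variables $x_j$ and are available as input gates of $\Phi$, or maxout neurons that precede $v$ in the order and have already been simulated. So we may assume inductively that a gate computing $z_u$ exists for every $u\in\delta^-_v$.

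For each $i\in[k]$ we compute the affine part $L^i_v=\sum_{u} w^i_{uv}z_u$ as follows.
\begin{itemize}
\item For every $u$ with $w^i_{uv}>0$, apply a scalar gate with scalar $w^i_{uv}$ to the gate of $z_u$. Scalar gates are not counted in $\size(\Phi)$.
\item Arcs with weight $0$ are simply dropped, since $0\cdot z_u$ contributes nothing.
\item Sum the resulting terms with a chain of binary $+$ gates.
\end{itemize}
If $v$ has $m_v=|\delta^-_v|$ in-neighbours, this uses at most $m_v-1$ plus gates per $i$. If all weights $w^i_{uv}$ vanish for some $i$, then $L^i_v\equiv 0$, and we use the constant input gate $0$ instead, at no cost.

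We then combine $L^1_v,\dots,L^k_v$ with $k-1$ binary $\max$ gates. This yields a gate computing exactly $z_v$, and the gadget for $v$ uses at most $k(m_v-1)+(k-1)\leq k\,m_v$ counted gates.

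To count, note that $m_v\leq s+d$ for each $v$: the in-neighbours of $v$ are among the other $s-1$ maxout neurons and the $d$ inputs (assuming, as usual, no parallel arcs in the network). Summing over the $s$ maxout neurons gives
\[
\size(\Phi)\leq \sum_v k\,m_v\leq ks(s+d)=ks^2+ksd .
\]
Parallel arcs, if the network definition permits them, can be merged beforehand by adding their weights, which preserves nonnegativity. The output gate of $\Phi$ is the gadget of the output neuron, so $\Phi$ computes $f$.

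I expect no serious obstacle. The only points requiring care are the following.
\begin{itemize}
\item Scalar gates need strictly positive scalars, which is why zero weights must be dropped rather than scaled.
\item A linear part may be identically zero, which the constant $0$ input gate handles; this also explains why bias-freeness matters, since biases could not be represented by a gate that computes a positively homogeneous function.
\item The circuit is a DAG with a unique output gate, which holds because we follow the network's topology. Any dangling gates can be pruned, which only decreases the size.
\end{itemize}
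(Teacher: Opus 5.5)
Your proposal is correct and follows the paper's proof essentially step for step. Both use scalar gates for the nonzero weights, at most $|\delta^-_v|-1$ plus gates per affine part, the constant-$0$ input gate for identically zero parts, $k-1$ max gates per neuron, and the bound $|\delta^-_v|\leq s+d$, which gives $ks(s+d)$. Your remarks on pruning dangling gates (for example, when all outgoing weights of a neuron are zero) and on merging parallel arcs are minor bookkeeping points that the paper leaves implicit.
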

\begin{proof}
    Fix a bias-free monotone rank-$k$ maxout network of size $s\geq 1$.
    Each maxout neuron computes an expression $z_v(x)=\max_{i=1,\dots, k}\sum_{u\in \delta^{-}_v}w^i_{uv}z_u(x)$ with weights $w^i_{uv}\geq 0$.
    For each of the $k$ affine functions, we use scalar gates for the nonzero weights and $|\delta^{-}_v|-1$ sum gates to add the weighted inputs.
    We then use $k-1$ $\max$ gates to take the maximum of the $k$ sums; see \Cref{fig:maxout_rank3}.
    If some weight is zero, we omit the corresponding scalar gate.
    If all weights of an affine function are zero, it is represented by the constant-0 input gate.
    Since $|\delta^{-}_v|\leq d+s$ for every maxout neuron $v$, one maxout neuron contributes at most $k(d+s-1)$ sum gates and $k-1$ $\max$ gates.
    Thus,
    \[
    \size(\Phi)\leq s(k(d+s-1)+(k-1))\leq ks(d+s)=ks^2+ksd.\qedhere
    \]
\end{proof}
\begin{figure}
\centering
\resizebox{.7\linewidth}{!}{
\begin{tikzpicture}[
    >=stealth,
    gate/.style={
        draw,
        circle,
        minimum size=8mm,
        inner sep=0pt,
        align=center
    },
    input/.style={gate, font=\small},
    scalar/.style={gate, fill=OIorange!35, font=\scriptsize},
    sum/.style={gate, fill=OIgreen!30, font=\footnotesize},
    maxgate/.style={gate, fill=OIblue!30, font=\footnotesize},
    edge/.style={->, black!60},
    fadedge/.style={->, gray!20},
    expr/.style={font=\footnotesize, align=center, black!50}
]

\node[input] (z1) at (7.8, 1) {$z_1$};
\node[input] (z2) at (9.2, 1) {$z_2$};
\node[input] (z3) at (10.6, 1) {$z_3$};

\node[scalar] (m11) at (3.2, -.2) {$w^1_{1v}$};
\node[scalar] (m12) at (4.6, -.2) {$w^1_{2v}$};
\node[scalar] (m13) at (6, -.2) {$w^1_{3v}$};

\node[sum] (p11) at (3.9, -1.25) {$+$};
\node[sum] (p12) at (5.3, -1.25) {$+$};

\node[scalar] (m21) at (7.8, -.2) {$w^2_{1v}$};
\node[scalar] (m22) at (9.2, -.2) {$w^2_{2v}$};
\node[scalar] (m23) at (10.6, -.2) {$w^2_{3v}$};

\node[sum] (p21) at (8.5, -1.25) {$+$};
\node[sum] (p22) at (9.9, -1.25) {$+$};

\node[scalar] (m31) at (12.4, -.2) {$w^3_{1v}$};
\node[scalar] (m32) at (13.8, -.2) {$w^3_{2v}$};
\node[scalar] (m33) at (15.2, -.2) {$w^3_{3v}$};

\node[sum] (p31) at (13.1, -1.25) {$+$};
\node[sum] (p32) at (14.5, -1.25) {$+$};

\node[maxgate] (max12)  at (7.1, -2.45) {$\max$};
\node[maxgate] (max123) at (11.9, -2.45) {$\max$};

\draw[fadedge] (z1.south) -- (m11.north);
\draw[fadedge] (z2.south) -- (m12.north);
\draw[fadedge] (z3.south) -- (m13.north);
\draw[edge] (z1.south) -- (m21.north);
\draw[edge] (z2.south) -- (m22.north);
\draw[edge] (z3.south) -- (m23.north);
\draw[fadedge] (z1.south) -- (m31.north);
\draw[fadedge] (z2.south) -- (m32.north);
\draw[fadedge] (z3.south) -- (m33.north);

\draw[edge] (m11) -- (p11);
\draw[edge] (m12) -- (p11);
\draw[edge] (p11) -- (p12);
\draw[edge] (m13) -- (p12);

\draw[edge] (m21) -- (p21);
\draw[edge] (m22) -- (p21);
\draw[edge] (p21) -- (p22);
\draw[edge] (m23) -- (p22);

\draw[edge] (m31) -- (p31);
\draw[edge] (m32) -- (p31);
\draw[edge] (p31) -- (p32);
\draw[edge] (m33) -- (p32);

\draw[edge] (p12) -- (max12);
\draw[edge] (p22) -- (max12);

\draw[edge] (max12) -- (max123);
\draw[edge] (p32) -- (max123);

\node[expr,fill=white] at (5.3, -2.05) {$S_1=\sum_{j=1}^3 w^1_{jv}z_j$};
\node[expr] at (9.9, -2.05) {$S_2=\sum_{j=1}^3 w^2_{jv}z_j$};
\node[expr] at (14.6, -2.05) {$S_3=\sum_{j=1}^3 w^3_{jv}z_j$};
\node[expr] at (6.9, -3.1) {$\max\{S_1,S_2\}$};
\node[expr] at (12.1, -3.1) {$z_v=\max\{S_1,S_2,S_3\}$};

\end{tikzpicture}}
\caption{
A rank-3 maxout node with three inputs represented as an STC.
Orange nodes are scalar gates, green nodes are sum gates, and blue nodes are $\max$ gates.}
\label{fig:maxout_rank3}
\end{figure}
\paragraph{Traces of STCs.}
For tropical circuit lower bounds it is useful to look at the monomials generated by the circuit.
In arithmetic circuit theory subcircuits that compute monomials are called parse trees \citep{jerrum1982some} or traces \citep{jukna2016tropical}.
We use the term trace.
A \emph{trace} $T$ of an STC $\Phi$ is a subgraph of the directed acyclic graph underlying $\Phi$ and is defined recursively from the output gate as follows.
Every trace $T$ contains the output gate.
Let $v$ be a gate that is already in $T$.
If $v$ is a $\max$ gate, then exactly one predecessor gate and the corresponding arc are included in $T$.
If $v$ is a $+$ gate, then both predecessor gates and the corresponding two arcs are included in $T$; if the two arcs have the same tail, both arcs are still part of the trace.
If $v$ is a scalar gate, then its predecessor gate and the corresponding arc are included in $T$.
See \Cref{fig:trace_example} and \Cref{fig:trace_proof4} for examples.

Each trace $T$ has a monomial coefficient vector $a_T\in \R^d_{\geq 0}$.
It is computed by the STC given by the trace $T$ where every $\max$ gate computes the function of its unique predecessor.
Equivalently, an input $x_i$ contributes $e_i$, the constant input $0$ contributes the zero vector, a scalar gate with scalar $\lambda$ multiplies the vector by $\lambda$, and a $+$ gate adds the vectors coming from its two incoming arcs.

\begin{lemma}\label{lem:trace_union}
    Let $\mathcal{T}$ be the set of all traces of an STC $\Phi$, and let $a_T$ be the monomial coefficient vector of a trace $T$.
    Then
    \[
    f_\Phi(x)=\max_{T\in \mathcal{T}}a_T^\top x,
    \qquad
    P_\Phi = \conv\{a_T: T\in \mathcal{T}\}.
    \]
\end{lemma}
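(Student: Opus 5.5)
We prove a slightly stronger statement for every gate $v$ of $\Phi$, by induction along a topological order of the DAG. Let $\mathcal{T}_v$ be the set of traces of the subcircuit rooted at $v$, meaning the sub-DAG of all gates from which $v$ is reachable, with $v$ as output gate. The claim is
\[
P_v=\conv\{a_T: T\in\mathcal{T}_v\}.
\]
Taking $v$ to be the output gate gives the polytope identity. The functional identity then follows from the duality $\varphi$. Indeed, the support function of $\conv$ of a finite set $A$ is $x\mapsto\max_{a\in A}a^\top x$, and $f_\Phi=f_{P_\Phi}$. Finiteness of $\mathcal{T}$ holds because the DAG is finite, so each gate has finitely many traces.

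The base case is an input gate. Its only trace is the single node, with $a_T=e_i$ or $a_T=0$. This matches $P_v=\{e_i\}$ or $P_v=\{0\}$.

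For the inductive step we first record how traces at $v$ decompose. A trace at $v$ is the same as choosing traces at its predecessors, as follows.
\begin{itemize}
\item If $v$ is a $\max$ gate with predecessors $u,w$, a trace at $v$ is either a trace at $u$ or a trace at $w$, plus one arc. So $\{a_T:T\in\mathcal{T}_v\}$ is the union of the coefficient sets at $u$ and $w$.
\item If $v$ is a scalar gate with scalar $\lambda$, the coefficient set is $\lambda$ times the coefficient set at $u$.
\item If $v$ is a $+$ gate, a trace is a pair $(T_u,T_w)$ and $a_T=a_{T_u}+a_{T_w}$. This includes $u=w$ with parallel arcs, where the two copies may choose different traces.
\end{itemize}

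We then apply the elementary identities, valid for finite sets $A,B$ and $\lambda>0$:
\[
\conv(A\cup B)=\conv(\conv A\cup\conv B),\qquad \conv(\lambda A)=\lambda\conv A,\qquad \conv(A+B)=\conv A+\conv B .
\]
Combined with the inductive hypothesis and the recursive definition of $P_v$, these give the claim at $v$.

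The one subtle point is the $+$ case. Traces are subgraphs rather than trees, so a shared gate could in principle be forced to make the same $\max$ choice in both branches, and then the coefficient set would not be the full Minkowski sum. This needs care.

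We handle it by interpreting traces as parse trees, i.e. the unfolding into a tree, so the two branches choose independently. This is what the convention ``both arcs are still part of the trace'' signals.

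If instead one insists on consistent choices, we argue differently. The consistent coefficient vectors form a subset $S$ of the full Minkowski sum, so $\conv S\subseteq P_v$. For the reverse inclusion, every vertex of $P_\Phi$ maximizes some generic linear function $c$. The greedy trace that, at each $\max$ gate, picks the predecessor with larger value under $c$ is consistent and attains $f_\Phi(c)$. Hence every vertex of $P_\Phi$ lies in $S$, giving equality at the output gate.
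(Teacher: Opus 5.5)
Your proof is correct, but it rests on a misreading of the definition. The paper defines a trace as a \emph{subgraph} of the DAG, so a $\max$ gate in a trace keeps exactly one incoming arc, and that choice is shared by every path through it. The clause about parallel arcs only ensures that a $+$ gate fed twice by the same gate $u$ yields $2a_{T_u}$. It does not mean traces are unfolded parse trees. Hence your induction via $\conv(A+B)=\conv A+\conv B$ proves the identity for a larger coefficient set, and the lemma as stated is proved only by your last paragraph.

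That last paragraph is the paper's argument in dual form. The paper observes that replacing each $\max$ gate by one of its inputs in a monotone circuit can only decrease the output, so $a_T^\top x\le f_\Phi(x)$; this is equivalent to your $S\subseteq P_\Phi$. The greedy trace $T_x$ then gives equality pointwise in $x$, and the polytope identity follows from the bijection $\varphi$. Working pointwise with functions spares the paper the induction over gates, the Minkowski-sum identities, and the vertex/generic-$c$ step. Your detour additionally shows that consistent traces and parse trees have the same convex hull, which is not needed here.

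The inclusion $S\subseteq P_\Phi$ also needs no parse trees. Take a consistent trace at a $+$ gate $v$ with predecessors $u,w$, and restrict it to the gates from which $u$, respectively $w$, is reachable. This gives traces at $u$ and $w$, so $S_v\subseteq S_u+S_w\subseteq P_u+P_w=P_v$ by induction, and similarly for the other gate types.
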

\begin{proof}
    Every trace $T$ corresponds to an affine function $a_T^\top x$ with $a_T^\top x\leq f_\Phi(x)$ for all $x\in \R^d$, since each max gate of $\Phi$ is replaced by only one of its inputs.
    Hence $\max_{T\in \mathcal{T}}a_T^\top x\leq f_\Phi(x)$.
    Fix $x\in \R^d$.
    From the output gate to the inputs, choose at every $\max$ gate a predecessor gate with the maximum value at $x$.
    This induces a trace $T_x$ with $f_\Phi(x)=a_{T_x}^\top x\leq \max_{T\in \mathcal{T}}a_T^\top x$ and proves equality.
\end{proof}

We next prove the decomposition lemma that we use in our lower-bound arguments.
It is a tropical and polytopal analogue of the fact that a polynomial computed by a small arithmetic circuit can be written as a sum of a small number of products of ``simpler'' polynomials.
This general technique has been widely used, for example, in \citep{hyafil1978parallel,valiant1980negation,hrubevs2011homogeneous,raz2011multilinear,jukna2015lower,jukna2019greedy,yehudayoff2019separating,srinivasan2020strongly,chattopadhyay2022monotone,kluk2025lower,cavalar2026negations}, also compare the discussions in \citep{shpilka2010arithmetic,jukna2016tropical}.
\paragraph{Measure.}
We call a function $\mu:\mathcal{P}_+^d\to \R_{\geq 0}$ a \emph{measure} if $\mu(\{0\})\leq 1$, $\mu(\{e_i\}) \leq 1$ for all $i\in [d]$, and, for all $A,B\in \mathcal{P}_+^d$ and all $\lambda\in \R_{>0}$,
\[
\mu(\lambda  A) = \mu(A), \qquad\mu(A+B) \leq \mu(A) + \mu(B), \quad\text{and}\quad \mu(\conv(A \cup B)) \leq \mu(A) + \mu(B).
\]
We use the convention $\mu(\emptyset)=0$.
Later, in \Cref{sec:birkhoff_polytope,sec:dst_polytope}, we will use measures of the form $P\mapsto \max_{v\in V(P)}|\supp(\pi(v))|$, where $\pi:\R^d\to \R^m$ is a linear map.

\begin{lemma}[Decomposition Lemma]\label{lem:decomposition_lower_bound}
Let $r \geq 1$, let $\mu:\mathcal{P}_+^d\to \R_{\geq 0}$ be a measure, and let $\Phi$ be an STC with $\size(\Phi)=t$ computing the support function of a polytope $Q\in \mathcal{P}_+^d$.
Then there are an integer $k \leq t$ and polytopes $A_0, A_1, B_1, \dots, A_k, B_k \in \mathcal{P}^d_+$ such that
\begin{equation*}
    Q = \mathrm{conv}\left(A_0 \cup \bigcup_{i=1}^k (A_i + B_i)\right)
\end{equation*}
where $\mu(A_0) \leq r$ and $r < \mu(A_i) \leq 2r$ for all $i = 1, \dots, k$.
\end{lemma}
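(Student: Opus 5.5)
Throughout, write $P_v$ for the polytope of gate $v$ and $\Phi_v$ for the subcircuit rooted at $v$. The plan is induction on $t=\size(\Phi)$, proving a slightly stronger statement that applies to every gate $v$ of $\Phi$ at once. For each gate $v$, $P_v$ should admit a decomposition
\[
P_v=\conv\Bigl(A_0^v\cup\bigcup_{i=1}^{k_v}(A_i^v+B_i^v)\Bigr)
\]
with $\mu(A_0^v)\le r$, $r<\mu(A_i^v)\le 2r$, and $k_v$ at most the number of $\max$ and $+$ gates in $\Phi_v$. Applying this to the output gate gives the lemma.

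The parts $A_i+B_i$ will not be produced by unfolding the induction hypothesis. Instead they come from a direct construction, in the standard way for such decomposition lemmas.

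\emph{Step 1 (threshold gates).} Say a gate $v$ is \emph{small} if $\mu(P_v)\le r$. Input gates are small, since $\mu\le 1\le r$. Because scalar gates preserve $\mu$, and $\max$ and $+$ gates satisfy $\mu(P_v)\le \mu(P_u)+\mu(P_w)$, any gate $v$ with $\mu(P_v)>r$ has a predecessor chain along which $\mu$ drops. Call a $+$ or $\max$ gate $v$ \emph{critical} if $\mu(P_v)>r$ but it has a predecessor $u$ with $\mu(P_u)$ large, in the following sense. Following a trace downward from the output, $\mu$ of the current gate eventually passes from $>r$ to $\le r$. At a $+$ or $\max$ gate $v$ with $\mu(P_v)>r$, the subadditivity bound $\mu(P_u)+\mu(P_w)\ge\mu(P_v)>r$ forces at least one predecessor to have measure larger than $r/2$.

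This suggests using threshold gates: gates $u$ with $r<\mu(P_u)\le 2r$. They appear along any downward path from a gate of measure $>2r$ to an input. Indeed, if $\mu(P_v)>2r$ and $v$ is a $+$ or $\max$ gate, some predecessor has measure $>r$, and if that predecessor has measure $\le 2r$ it is a threshold gate.

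\emph{Step 2 (trace decomposition).} Use \Cref{lem:trace_union}. For a trace $T$, walk from the output down the trace, at $+$ gates always entering a predecessor with the larger measure, until reaching the first gate $u$ with $\mu(P_u)\le 2r$. If no gate with measure $>r$ is ever encountered, then the output itself has $\mu\le r$, so $P_\Phi$ goes entirely into $A_0$. Otherwise, at the last step we move from a gate of measure $>2r$ (or start at the output gate, which has measure $\le 2r$) to $u$ with $r<\mu(P_u)\le 2r$. The exception is passing through a $\max$ gate, where only one predecessor is in the trace. For that case I would define things relative to the trace-selected predecessor. Then $a_T=\lambda a_{T_u}+b$, where $\lambda$ is the product of scalars along the path, $a_{T_u}\in P_u$, and $b$ is the sum of the other branches. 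Grouping by $u$, set $A_u=\lambda P_u$. The difficulty is that $\lambda$ can vary between paths; I would remove it by pushing scalars down, i.e. normalizing the circuit so that scalar gates sit only directly on inputs, using $\lambda(P+Q)=\lambda P+\lambda Q$ and $\lambda\conv(P\cup Q)=\conv(\lambda P\cup\lambda Q)$. This keeps the size and the vertex structure but may duplicate gates, so it has to be done carefully.

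A cleaner route, which I would try first, is the classical substitution argument. Pick a threshold $+$ or $\max$ gate $u$ (one with $r<\mu(P_u)\le2r$) that is maximal, meaning no path above it leads to another such gate before the output. Set $A_1=P_u$, and let $B_1$ be the polytope describing how $P_u$ enters the output: replace $u$ by a fresh input $y$ and take the part of the output polytope that is linear in $y$, which is valid because of the ``freshman's dream'' $(x_1\oplus x_2)^{\odot\alpha}=x_1^{\odot\alpha}\oplus x_2^{\odot\alpha}$. Then replace $u$ by the constant $0$ and recurse on the smaller circuit, which has size $t-1$. At the end the remaining circuit has $\mu\le r$, and this gives $A_0$.

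The main obstacle is justifying that the output polytope splits as $\conv(P_{\Phi[u\to\emptyset]}\cup(P_u+B))$. Setting $u$ to the empty polytope (the tropical $-\infty$) rather than to $0$ is what makes this work: traces avoiding $u$ give the first part, and traces through $u$ factor as $P_u+B$ after scalars are absorbed into $B$ and into dilations of $P_u$, with $\mu$ unchanged under dilation. Since $\emptyset$ is absorbing for $+$ and neutral for $\conv$, deleting $u$ removes at least that gate, so $k\le t$.
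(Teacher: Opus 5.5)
\paragraph{Verdict.} Your ``cleaner route'' follows the paper's proof in outline. You pick a $+$ or $\max$ gate $u$ with $r<\mu(P_u)\le 2r$, substitute a fresh variable for it, and split the traces into two groups. Those avoiding $u$ are computed, after deleting $u$ (i.e.\ setting it to $\emptyset$), by a circuit of size at most $t-1$, so induction applies; the others pass through $u$. Your ``maximal'' threshold gate is fine for this lemma. The paper instead descends to a gate whose two predecessors both have measure at most $r$, which is what later gives the $\size_+$ refinement in \Cref{cor:general_bound}.

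\paragraph{The gap.} It is in the one step that is genuinely new for STCs. By \Cref{lem:trace_union}, the traces through $u$ contribute
\[
R=\conv\Bigl(\bigcup_{T\ni u}\bigl(\{a_T\}+c_TP_u\bigr)\Bigr),
\]
where $c_T>0$ depends on $T$. It is a sum, over the paths from $u$ to the output inside $T$, of products of positive scalars, so it can be any positive real, including values below $1$. You must show that $R$ is a \emph{single} Minkowski sum $A+B$ with $A,B\in\mathcal{P}^d_+$ and $\mu(A)\in(r,2r]$, and the proposal does not.
\begin{itemize}
\item Setting $A_1=P_u$ fails in general. If the only trace through $u$ has $c_T=\tfrac12$, then $R=\{a_T\}+\tfrac12P_u$, and comparing widths shows $R=P_u+B$ is impossible unless $P_u$ is a point.
\item The freshman's dream $\lambda\conv(P\cup Q)=\conv(\lambda P\cup\lambda Q)$ is not the identity you need here.
\item ``The part linear in $y$'' ignores traces with $c_T\neq1$.
\item Pushing scalars down to the inputs (your Step 2) can duplicate $+$ and $\max$ gates, so the bound $k\le t$ is lost.
\item Your last paragraph (``scalars are absorbed into $B$ and into dilations of $P_u$'') names the right outcome but gives no mechanism.
\end{itemize}

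\paragraph{The missing step.} Factor out the smallest coefficient. There are finitely many traces, so $c_{\min}=\min_{T\ni u}c_T$ exists. It is positive because each such $T$ contains a path from $u$ to the output, all scalars are positive, and $+$ gates only add nonnegative contributions. By convexity of $P_u$, $c_TP_u=c_{\min}P_u+(c_T-c_{\min})P_u$. Adding a convex set commutes with taking convex hulls of unions, so
\[
R=c_{\min}P_u+\conv\Bigl(\bigcup_{T\ni u}\bigl(\{a_T\}+(c_T-c_{\min})P_u\bigr)\Bigr).
\]
Both summands lie in $\mathcal{P}^d_+$, and $\mu(c_{\min}P_u)=\mu(P_u)\in(r,2r]$ by dilation invariance. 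With this inserted, and the case $S=\emptyset$ handled by $A_0=\emptyset$, your argument becomes the paper's proof.
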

\begin{proof}
    We use induction on $t$.
    For $t=0$, the output is obtained from an input gate using only scalar gates.
    Hence $Q$ is either $\{0\}$ or a dilation of some $\{e_i\}$.
    In both cases $\mu(Q)\leq 1\leq r$, so the statement holds with $k=0$ and $A_0=Q$.
    
    Assume now $t\geq 1$.
    If $\mu(Q)\leq r$, then the statement again holds with $k=0$ and $A_0=Q$.
    Thus, assume $\mu(Q)>r$.
    Starting at the output gate, we move backwards to the input gates by recursively choosing a predecessor gate of maximum measure as long as some predecessor gate has measure greater than $r$.
    Since all input gates have measure at most $1\leq r$, this process stops at a $\max$ or $+$ gate $v$ with $\mu(P_v)>r$ with predecessor gates $u,w$ such that $\mu(P_u)\leq r$ and $\mu(P_w)\leq r$.
    Since $\mu$ is a measure, $r<\mu(P_v)\leq \mu(P_u)+\mu(P_w)\leq 2r$.
    Note that $v$ cannot be a scalar gate, as scalar gates do not change the measure.

    Let $\Phi_v$ be the circuit obtained from $\Phi$ by deleting the arcs entering $v$ and treating $v$ as an additional input gate carrying the variable $x_{d+1}$.
    Non-input gates and arcs that cannot reach the output gate after the removal of the arcs are deleted; see \Cref{fig:trace_proof2} for an example.
    For every $x\in \R^d$, we have
    \[
    f_\Phi(x)=f_{\Phi_v}(x,f_v(x)),
    \]
    since after substituting $x_{d+1}=f_v(x)$, the value at $v$ in $\Phi_v$ equals the value at $v$ in $\Phi$ and all other gates in $\Phi_v$ compute the same values as in $\Phi$.

    Let $\mathcal{T}$ be the set of traces of $\Phi_v$.
    The monomial coefficient vector of a trace $T\in \mathcal{T}$ is $(a_T,c_T)\in \R^d_{\geq 0}\times \R_{\geq 0}$ and the trace $T$ computes the monomial $a_T^\top x+c_T x_{d+1}$.  
    By \Cref{lem:trace_union}, we have
    \[
    f_{\Phi_v}(x,x_{d+1})=\max_{T\in \mathcal{T}}(a_T^\top x+c_T x_{d+1}).
    \]
    Using $f_v=f_{P_v}$ and $c_T\geq 0$ it follows that
    \begin{align*}
    f_Q(x)
    =f_\Phi(x)
    =f_{\Phi_v}(x,f_v(x))
    =\max_{T\in \mathcal{T}}\left(a_T^\top x+c_T f_v(x)\right)
    =\max_{c\in \conv\{a_T+c_T p:\; T\in \mathcal{T},\,p\in P_v\}}c^\top x.
    \end{align*}
    Thus
    \[
    Q=\conv\{a_T+c_T p:\; T\in \mathcal{T},\,p\in P_v\}=\conv\left(\bigcup_{T\in \mathcal{T}}\left(\{a_T\}+c_TP_v\right)\right).
    \]

    We now partition $\mathcal{T}$ into two parts.
    Let $\mathcal{T}_v$ be the set of traces that contain gate $v$, and let $\mathcal{T}_{-v}\coloneq \mathcal{T}\setminus \mathcal{T}_v$.
    If $T\in \mathcal{T}_{-v}$, then $c_T=0$.
    If $T\in \mathcal{T}_v$, then $c_T>0$, since the trace contains a directed path from $v$ to the output gate and along this path all scalar factors are positive, while $+$ gates only add other coefficients.
    We have $Q=\conv(R\cup S)$, where
    \[
    R\coloneq \conv\left(\bigcup_{T\in \mathcal{T}_v}\left(\{a_T\}+c_TP_v\right)\right)
    \]
    and $S\coloneq \conv\{a_T: T\in \mathcal{T}_{-v}\}$.
    With $c_{\min}=\min_{T\in \mathcal{T}_v}c_T>0$, we have
    \[
    R=
    \conv\left(\bigcup_{T\in \mathcal{T}_v}\left(\{a_T\}+c_TP_v\right)\right)
    =
    c_{\min}\cdot P_v+\conv\left(\bigcup_{T\in \mathcal{T}_v}\left(\{a_T\}+(c_T-c_{\min})\cdot P_v\right)\right)\eqcolon A^*+B^*,
    \]
    with $A^*,B^*\in \mathcal{P}^d_+$ and $\mu(A^*)=\mu(c_{\min}P_v)=\mu(P_v)\in (r,2r]$.

    It remains to consider $S$.
    If $\mathcal{T}_{-v}$ is empty, we keep $S=\emptyset$.
    Otherwise, let $\Phi_{-v}$ be the circuit obtained from $\Phi_v$ by deleting the input gate $v$ and all arcs leaving it, and then recursively modifying the remaining circuit as follows.
    Delete scalar gates with no input, delete $+$ gates with fewer than two inputs, and contract $\max$ gates with only one input by replacing the arcs leaving the $\max$ gate by direct arcs from its unique predecessor to its successor gates and deleting the $\max$ gate; see \Cref{fig:trace_proof3} for an illustration.

    These simplifications preserve exactly the traces of $\Phi_v$ that do not contain $v$, since a trace without $v$ is not affected by the deletions, and contracting $\max$ gates with only one input does not change the monomial coefficient vector of any trace.
    Therefore the traces of $\Phi_{-v}$ are in bijection with the traces in $\mathcal{T}_{-v}$, with the same monomial coefficient vectors.
    By \Cref{lem:trace_union}, $\Phi_{-v}$ computes the support function of $S$.
    Also, $\size(\Phi_{-v})\leq t-1$.

    If $S\neq\emptyset$, we apply the induction hypothesis to $\Phi_{-v}$.
    This gives an integer $\ell \leq t-1$ and polytopes $A_0,A_1,B_1,\dots, A_\ell,B_\ell\in \mathcal{P}^d_+$ with $S=\conv\left(A_0\cup \bigcup_{i=1}^\ell\left(A_i+B_i\right)\right)$ and $\mu(A_0)\leq r$ and $r<\mu(A_i)\leq 2r$ for all $i=1,\dots, \ell$.
    If $S=\emptyset$, we can use $\ell=0$ and $A_0=\emptyset$.
    Then, with $A_{\ell+1}\coloneq A^*$ and $B_{\ell+1}\coloneq B^*$, we obtain 
    \[
    Q=\conv\left(A_0\cup \bigcup_{i=1}^{\ell+1}\left(A_i+B_i\right)\right)
    \]
    with $\ell+1\leq t$.
    This completes the induction.
\begin{figure}
\centering
\begin{subfigure}{0.23\textwidth}
\centering
\resizebox{\linewidth}{!}{
\begin{tikzpicture}[>=stealth]
\tikzset{
    inp/.style={draw,circle,minimum size=8mm,inner sep=0pt,font=\normalsize},
    scalar/.style={draw,circle,minimum size=8mm,inner sep=0pt,font=\small, fill=OIorange!35},
    plus/.style={draw,circle,minimum size=8mm,inner sep=0pt,font=\small, fill=OIgreen!30},
    maxg/.style={draw,circle,minimum width=8mm,inner sep=0pt,font=\small, fill=OIblue!30},
    expr/.style={font=\scriptsize, align=center}
}

\node[inp] (x1) at (1,0) {$x_1$};
\node[inp] (x2) at (3,0) {$x_2$};
\node[inp] (x3) at (5,0) {$x_3$};

\node[plus] (a1) at (1,-1.3) {$+$};         
\node[maxg] (a2) at (3,-1.3) {$\max$};      
\node[maxg] (a3) at (5,-1.3) {$\max$};      
\node[plus,fill=OIgreen!80] (b1) at (2,-2.6) {$+$};         
\node[black] at (1.4,-2.6) {$v$};
\node[plus] (b2) at (4,-2.6) {$+$};         
\node[scalar] (c1) at (2,-3.9) {$2$};        
\node[maxg] (c2) at (3.5,-3.9) {$\max$};      
\node[plus] (d) at (5,-3.9) {$+$};          

\draw[->] (x1) -- (a1);
\draw[->] (x2) -- (a1);
\draw[->] (x1) -- (a2);
\draw[->] (x2) -- (a2);
\draw[->] (x2) -- (a3);
\draw[->] (x3) -- (a3);
\draw[->] (a1) -- (b1);
\draw[->] (a2) -- (b1);
\draw[->] (a2) -- (b2);
\draw[->] (a3) -- (b2);
\draw[->] (b1) -- (c1);
\draw[->] (c1) -- (c2);
\draw[->] (x1) -- (c2);
\draw[->] (c2) -- (d);
\draw[->] (b2) -- (d);
\end{tikzpicture}}
\caption{STC $\Phi$.}
\label{fig:trace_proof1}
\end{subfigure}
\hfill
\begin{subfigure}{0.23\textwidth}
\centering
\resizebox{\linewidth}{!}{
\begin{tikzpicture}[>=stealth]
\tikzset{
    inp/.style={draw,circle,minimum size=8mm,inner sep=0pt,font=\normalsize},
    scalar/.style={draw,circle,minimum size=8mm,inner sep=0pt,font=\small, fill=OIorange!35},
    plus/.style={draw,circle,minimum size=8mm,inner sep=0pt,font=\small, fill=OIgreen!30},
    maxg/.style={draw,circle,minimum width=8mm,inner sep=0pt,font=\small, fill=OIblue!30},
    expr/.style={font=\scriptsize, align=center}
}

\node[inp] (x1) at (1,0) {$x_1$};
\node[inp] (x2) at (3,0) {$x_2$};
\node[inp] (x3) at (5,0) {$x_3$};

\node[plus,black!20,fill=black!10] (a1) at (1,-1.3) {$+$};
\node[maxg] (a2) at (3,-1.3) {$\max$};      
\node[maxg] (a3) at (5,-1.3) {$\max$};      
\node[inp] (b1) at (2,-2.6) {$x_4$};
\node[plus] (b2) at (4,-2.6) {$+$};         
\node[scalar] (c1) at (2,-3.9) {$2$};        
\node[maxg] (c2) at (3.5,-3.9) {$\max$};      
\node[plus] (d) at (5,-3.9) {$+$};          

\draw[->,black!10] (x1) -- (a1);
\draw[->,black!10] (x2) -- (a1);
\draw[->] (x1) -- (a2);
\draw[->] (x2) -- (a2);
\draw[->] (x2) -- (a3);
\draw[->] (x3) -- (a3);
\draw[->,black!10] (a1) -- (b1);
\draw[->,black!10] (a2) -- (b1);
\draw[->] (a2) -- (b2);
\draw[->] (a3) -- (b2);
\draw[->] (b1) -- (c1);
\draw[->] (c1) -- (c2);
\draw[->] (x1) -- (c2);
\draw[->] (c2) -- (d);
\draw[->] (b2) -- (d);
\end{tikzpicture}}
\caption{STC $\Phi_v$.}
\label{fig:trace_proof2}
\end{subfigure}
\hfill
\begin{subfigure}{0.23\textwidth}
\centering
\resizebox{\linewidth}{!}{
\begin{tikzpicture}[>=stealth]
\tikzset{
    inp/.style={draw,circle,minimum size=8mm,inner sep=0pt,font=\normalsize},
    scalar/.style={draw,circle,minimum size=8mm,inner sep=0pt,font=\small, fill=OIorange!35},
    plus/.style={draw,circle,minimum size=8mm,inner sep=0pt,font=\small, fill=OIgreen!30},
    maxg/.style={draw,circle,minimum width=8mm,inner sep=0pt,font=\small, fill=OIblue!30},
    expr/.style={font=\scriptsize, align=center}
}
\node[inp] (x1) at (1,0) {$x_1$};
\node[inp] (x2) at (3,0) {$x_2$};
\node[inp] (x3) at (5,0) {$x_3$};

\node[maxg] (a2) at (3,-1.3) {$\max$};      
\node[maxg] (a3) at (5,-1.3) {$\max$};      
\node[inp,black!10] (b1) at (2,-2.6) {$x_4$};
\node[plus] (b2) at (4,-2.6) {$+$};         
\node[scalar,black!20,fill=black!10] (c1) at (2,-3.9) {$2$};
\node[maxg,black!20,fill=black!10] (c2) at (3.5,-3.9) {$\max$};
\node[plus] (d) at (5,-3.9) {$+$};          

\draw[->] (x1) -- (a2);
\draw[->] (x2) -- (a2);
\draw[->] (x2) -- (a3);
\draw[->] (x3) -- (a3);
\draw[->] (a2) -- (b2);
\draw[->] (a3) -- (b2);
\draw[->,black!10] (b1) -- (c1);
\draw[->,black!10] (c1) -- (c2);
\draw[->,black!10] (x1) -- (c2);
\draw[->,dashed] (x1) to[out=-40,in=160] (d);
\draw[->,black!10] (c2) -- (d);
\draw[->] (b2) -- (d);
\end{tikzpicture}}
\caption{STC $\Phi_{-v}$.}
\label{fig:trace_proof3}
\end{subfigure}
\hfill
\begin{subfigure}{0.23\textwidth}
\centering
\resizebox{\linewidth}{!}{
\begin{tikzpicture}[>=stealth]
\tikzset{
    inp/.style={draw,circle,minimum size=8mm,inner sep=0pt,font=\normalsize},
    scalar/.style={draw,circle,minimum size=8mm,inner sep=0pt,font=\small, fill=OIorange!35},
    plus/.style={draw,circle,minimum size=8mm,inner sep=0pt,font=\small, fill=OIgreen!30},
    maxg/.style={draw,circle,minimum width=8mm,inner sep=0pt,font=\small, fill=OIblue!30},
    expr/.style={font=\scriptsize, align=center}
}

\node[inp] (x1) at (1,0) {$x_1$};
\node[inp] (x2) at (3,0) {$x_2$};
\node[inp] (x3) at (5,0) {$x_3$};

\node[maxg] (a2) at (3,-1.3) {$\max$};      
\node[maxg] (a3) at (5,-1.3) {$\max$};      
\node[inp] (b1) at (2,-2.6) {$x_4$};
\node[plus] (b2) at (4,-2.6) {$+$};         
\node[scalar] (c1) at (2,-3.9) {$2$};        
\node[maxg] (c2) at (3.5,-3.9) {$\max$};      
\node[plus] (d) at (5,-3.9) {$+$};          

\draw[->] (x1) -- (a2);
\draw[->] (x3) -- (a3);
\draw[->] (a2) -- (b2);
\draw[->] (a3) -- (b2);
\draw[->] (b1) -- (c1);
\draw[->] (c1) -- (c2);
\draw[->] (c2) -- (d);
\draw[->] (b2) -- (d);
\end{tikzpicture}}
\caption{Trace $T$ of $\Phi_v$.}
\label{fig:trace_proof4}
\end{subfigure}
\label{fig:trace_proof}
\caption{
Illustration of circuits in the proof of \Cref{lem:decomposition_lower_bound}.
The STC $\Phi$ in (a) computes $f_Q$ for $Q=\conv\{(5,3,0),(5,2,1),(2,6,0),(2,5,1),(2,1,0),(1,2,0),(2,0,1),(1,1,1)\}$.
The gate $v$ in (a) computes $f_{P_v}$ for $P_v=\conv\{(2,1,0),(1,2,0)\}$.
The STC $\Phi_v$ in (b) computes $f_P$ for $P=\conv\{(1,1,0,2),(1,0,1,2),(0,2,0,2),(0,1,1,2),(2,1,0,0),(1,2,0,0),(2,0,1,0),(1,1,1,0)\}$.
The STC $\Phi_{-v}$ in (c) computes $f_S$ for $S=\{(2,1,0),(1,2,0),(2,0,1),(1,1,1)\}$.
The monomial coefficient vector of the trace $T$ in (d) is $(1,0,1,2)$.
In (b) and (c), greyed out gates and arcs are deleted; the new dashed arc comes from the deletion of the $\max$ gate.
}
\label{fig:rectangle_proof}
\end{figure}
\end{proof}
\Cref{lem:decomposition_lower_bound} can be used to prove lower bounds on the size of STCs.
The idea is to choose a measure and then show that every decomposition of the form in the lemma requires many summands $A_i+B_i$.
Additionally, if $\max$ gates do not increase the measure, then the same argument gives a lower bound on the number of $+$ gates.
\begin{corollary}\label{cor:general_bound}
    Let $r \geq 1$, let $Q\in \mathcal{P}^d_+$, and let $\mu:\mathcal{P}_+^d\mapsto \R_{\geq 0}$ be a measure.
    Suppose that for every collection $A_0, A_1, B_1, \dots, A_k, B_k \in \mathcal{P}^d_+$ with $\mu(A_0) \leq r$, $r < \mu(A_i) \leq 2r$ for all $i = 1, \dots, k$, and $Q = \mathrm{conv}\left(A_0 \cup \bigcup_{i=1}^k (A_i + B_i)\right)$, it holds that $L\leq k$.
    Then every STC computing $f_Q$ satisfies $L\leq \size(\Phi)$.
    Moreover, if $\mu(\conv(A \cup B)) \leq \max(\mu(A),\mu(B))$ for all $A,B\in \mathcal{P}^d_+$, then every STC computing $f_Q$ satisfies $L\leq \size_+(\Phi)$.
\end{corollary}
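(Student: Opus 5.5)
The first claim follows directly from the Decomposition Lemma. Let $\Phi$ be an STC computing $f_Q$ with $\size(\Phi)=t$. \Cref{lem:decomposition_lower_bound} yields $k\leq t$ and polytopes $A_0,A_1,B_1,\dots,A_k,B_k\in\mathcal{P}^d_+$ with $Q=\conv\left(A_0\cup\bigcup_{i=1}^k(A_i+B_i)\right)$, $\mu(A_0)\leq r$ and $r<\mu(A_i)\leq 2r$. The hypothesis then gives $L\leq k\leq t=\size(\Phi)$.

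For the refined bound on $\size_+(\Phi)$, I would not rerun the full induction. Instead I would observe that under the extra assumption the gate $v$ selected in the proof of \Cref{lem:decomposition_lower_bound} is always a $+$ gate, and that removing it removes at least one $+$ gate.

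\emph{Step 1: $v$ is a $+$ gate.} In the proof, $v$ is a $\max$ or $+$ gate with $\mu(P_v)>r$, and its predecessors $u,w$ satisfy $\mu(P_u),\mu(P_w)\leq r$. If $v$ were a $\max$ gate, the extra assumption would give $\mu(P_v)=\mu(\conv(P_u\cup P_w))\leq\max(\mu(P_u),\mu(P_w))\leq r$, which is a contradiction. So $v$ is a $+$ gate.

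\emph{Step 2: the $+$-count drops.} Passing from $\Phi$ to $\Phi_v$ and then to $\Phi_{-v}$ turns $v$ into an input gate and afterwards deletes it. The remaining simplifications only delete or contract gates and never create new ones. Hence $\size_+(\Phi_{-v})\leq\size_+(\Phi)-1$.

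\emph{Step 3: rerun the induction on $t_+=\size_+(\Phi)$.} This yields the same decomposition with $k\leq\size_+(\Phi)$. The base case needs care, since now $t_+=0$ while $\max$ gates may be present, so the output need not come from a single input gate. I would handle it as follows. If $\mu(Q)\leq r$, take $k=0$ and $A_0=Q$. Otherwise, the descent process of the proof would have to stop at a $+$ gate, by Step 1, and there is none. To make this rigorous, note that without $+$ gates every gate's polytope is built from inputs and their dilations using only $\conv$ of unions. Under the max-assumption its measure is then at most $\max$ over these pieces, which is at most $1\leq r$. So $\mu(Q)>r$ cannot occur, and $k=0$ works.

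Combining Steps 1–3 with the hypothesis gives $L\leq k\leq\size_+(\Phi)$. The main point to check is the base case and the claim that the simplification in $\Phi_{-v}$ never increases the number of $+$ gates. Both are routine bookkeeping; the substantive observation is Step 1.
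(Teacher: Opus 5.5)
Your proposal is correct and follows the paper's proof. The first bound is read off directly from the Decomposition Lemma. The refinement rests on the same observation as the paper's: under the extra assumption the selected gate $v$ cannot be a $\max$ gate, so each summand $A_i+B_i$ is charged to a distinct $+$ gate. Your explicit re-induction on $\size_+(\Phi)$ spells out what the paper compresses into ``the same argument gives''. This includes the base case where only $\max$ and scalar gates remain and the measure stays at most $1\le r$.
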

\begin{proof}
    The lower bound $L\leq \size(\Phi)$ follows directly from \Cref{lem:decomposition_lower_bound}.
    For the stronger statement, assume that $\mu(\conv(A \cup B)) \leq \max(\mu(A),\mu(B))$ for all $A,B\in \mathcal{P}^d_+$.
    In the proof of \Cref{lem:decomposition_lower_bound}, the selected gate $v$ and its predecessor gates $u,w$ satisfy $\mu(P_v)>r$, $\mu(P_u)\leq r$, and $\mu(P_w)\leq r$.
    Such a gate cannot be a $\max$ gate under the additional assumption and must therefore be a $+$ gate.
    Thus each summand $A_i+B_i$ in the decomposition comes from a distinct $+$ gate, and the same argument gives $L\leq \size_+(\Phi)$.
\end{proof}

For homogeneous multilinear polynomials $f$ with many monomials, variants of \Cref{cor:general_bound} using the degree as a measure have been (implicitly) used to prove lower bounds for arithmetic and tropical circuits \citep{shpilka2010arithmetic,jukna2015lower,jukna2019greedy}.
In our setting, this degree measure corresponds to $A\mapsto \max_{v\in V(A)}|\supp(v)|$.
A common strategy is to show that each summand $A_i+B_i$ contains only few monomials, which then yields a lower bound on the number of summands $A_i+B_i$ necessary to produce \emph{all} monomials of $f$.
Our proofs follow existing counting ideas from the lower bound proofs for arithmetic and tropical circuits in \citep{shpilka2010arithmetic,jukna2016tropical,jukna2023tropical}, but we have to account for fractional vertices and use different measures.

\section{Lower Bounds for the Birkhoff Polytope}
\label{sec:birkhoff_polytope}

\paragraph{Rectangles.}
A \emph{rectangle} of a polytope $P\in \mathcal{P}^d_+$ is a pair of polytopes $(X,Y)$ with $X,Y \in \mathcal{P}^d_+$ and $X+Y\subseteq P$.
A rectangle is \emph{nonempty} if $X+Y\neq \emptyset$.
It is \emph{vertex-realizing} if $(X+Y)\cap V(P)\neq \emptyset$.

Let $M_n \subset \{0,1\}^{n \times n}$ be the set of characteristic vectors of perfect matchings of the complete bipartite graph $K_{n,n}$.
Recall that a matching is a set of edges where no two edges share a common node.
A matching $M$ is perfect if every node is incident to exactly one edge in $M$.
The \emph{Birkhoff Polytope} is $P_{\rm PERM} = \text{conv}(M_n)$.
It has the standard halfspace representation
\[
P_{\rm PERM}=\left\{z\in \R_{\geq 0}^{n \times n}: \sum_{j=1}^n z_{ij}=1,\, i\in [n],\, \sum_{i=1}^n z_{ij}=1,\, j\in [n]\right\}.
\]
Our goal is now to bound the number of perfect matchings in a single rectangle $X+Y\subseteq P_{\rm PERM}$ with $n/3<\mu(X)\leq 2n/3$ for a suitable measure $\mu$.

For $z \in \mathbb{R}^{n \times n}$ we define the node weights $a_i(z) \coloneq \sum_{j=1}^n z_{ij}$ and $b_j(z) \coloneq \sum_{i=1}^n z_{ij}$.
\begin{lemma}
\label{lem:matching:fixed_node_weights}
Let $(X,Y)$ be a nonempty rectangle of $P_{\rm PERM}$.
Then there are vectors $a^*, b^* \in [0,1]^n$ such that, for all $x \in X$,
\[
a_i(x) = a^*_i \quad \text{and} \quad b_j(x) = b^*_j, \quad  i,j \in [n].
\]
Consequently, for all $y \in Y$ and $i,j\in[n]$, we have $a_i(y)=1 - a^*_i$ and $b_j(y)=1- b^*_j$.
\end{lemma}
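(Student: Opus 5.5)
The plan is to use only that $X$ and $Y$ are nonempty (since $X+Y\neq\emptyset$) and that every point of $X+Y$ satisfies the equality constraints in the halfspace description of $P_{\rm PERM}$. The node-weight maps $a_i$ and $b_j$ are linear, so for $x\in X$ and $y\in Y$ we have $a_i(x+y)=a_i(x)+a_i(y)$, and $x+y\in P_{\rm PERM}$ forces $a_i(x)+a_i(y)=1$ for every $i\in[n]$. The same holds for each $b_j$.

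First I fix any $y_0\in Y$. For every $x\in X$ this gives $a_i(x)=1-a_i(y_0)$, which does not depend on $x$. Set $a^*_i\coloneq 1-a_i(y_0)$ and likewise $b^*_j\coloneq 1-b_j(y_0)$. Swapping roles, fix any $x_0\in X$. For every $y\in Y$ we get $a_i(y)=1-a_i(x_0)=1-a^*_i$ and $b_j(y)=1-b^*_j$, which is the "consequently" part.

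It remains to show $a^*,b^*\in[0,1]^n$. Every point of $\mathcal{P}^d_+$ is nonnegative, so $X,Y\subseteq\R^{n\times n}_{\geq 0}$. Hence $a^*_i=a_i(x_0)\geq 0$, and $1-a^*_i=a_i(y_0)\geq 0$, which gives $a^*_i\leq 1$. The same argument works for $b^*$.

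There is no real obstacle here. The only point to state explicitly is the nonnegativity of $X$ and $Y$, which comes from the definition of a rectangle (its parts lie in $\mathcal{P}^d_+$). Without it, the bound $a^*\in[0,1]^n$ would need the extra observation that $X+Y\subseteq\R_{\geq0}^{n\times n}$, and that alone does not bound the individual summands.
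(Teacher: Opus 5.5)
Your proof is correct and follows the paper's argument. You fix a point of $Y$, use linearity of the node weights together with the equality constraints of $P_{\rm PERM}$, and swap roles for the consequence; you also make explicit the bound $a^*,b^*\in[0,1]^n$ via nonnegativity of $X,Y\in\mathcal{P}^d_+$, which the paper leaves implicit.
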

\begin{proof}
Fix an arbitrary $y \in Y$.
Then, since $X+Y\subseteq P_{\rm PERM}$, we have $x + y \in P_{\rm PERM}$ and thus $a_i(x)=1-a_i(y)\eqcolon a_i^*$ for all $x\in X$ and $i\in [n]$.
Analogously, we obtain the statements for the node weights $b_j$ and for $y\in Y$.
\end{proof}

The next lemma shows that a rectangle induces a certain structure for the perfect matchings it contains.

\begin{lemma}
\label{lem:restricted_matchings}
Let $(X,Y)$ be a vertex-realizing rectangle of $P_{\rm PERM}$, let $a^*$ and $b^*$ be as in \Cref{lem:matching:fixed_node_weights}, and define
\[
I \coloneq \{i\in [n]: a^*_i > 0\}, \qquad J \coloneq \{j\in [n]: b^*_j > 0\}.
\]
Then, $|I|=|J|$, and every $m \in (X+Y) \cap M_n$ satisfies $\supp(m) \subseteq (I \times J) \cup (I^c \times J^c)$.
\end{lemma}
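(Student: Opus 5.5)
The plan is to exploit that a vertex $m\in M_n$ lying in $X+Y$ forces a very rigid decomposition. Since $m$ is a vertex of $P_{\rm PERM}$ and $m=x+y$ with $x\in X$, $y\in Y$ and $x,y\ge 0$, I first use nonnegativity: $x\le m$ and $y\le m$ componentwise, so $\supp(x),\supp(y)\subseteq\supp(m)$. Write $m$ as the permutation matrix of a bijection $\sigma:[n]\to[n]$. Then $x$ is supported on the entries $(i,\sigma(i))$, so $a_i(x)=x_{i\sigma(i)}=b_{\sigma(i)}(x)$. By \Cref{lem:matching:fixed_node_weights}, this gives $a^*_i=b^*_{\sigma(i)}$ for every $i\in[n]$.

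From $a^*_i=b^*_{\sigma(i)}$ we read off that $a^*_i>0$ if and only if $b^*_{\sigma(i)}>0$, that is, $i\in I$ if and only if $\sigma(i)\in J$. Hence $\sigma$ restricts to a bijection $I\to J$, which gives $|I|=|J|$, and it maps $I^c$ onto $J^c$. Consequently every edge $(i,\sigma(i))$ of $m$ lies in $(I\times J)\cup(I^c\times J^c)$, as claimed. The sets $I,J$ depend only on $(X,Y)$ through $a^*,b^*$, so this conclusion holds simultaneously for all $m\in(X+Y)\cap M_n$.

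The vertex-realizing assumption is used only to guarantee that some $m$ exists, so that $|I|=|J|$ can be derived. There is no real obstacle. The one point to state carefully is the domination step $0\le x\le m$, which follows from $y\ge0$ because $Y\subseteq\R^{n\times n}_{\ge0}$. This step is what ties the row and column weights of $x$ together through $\sigma$.
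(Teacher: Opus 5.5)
Your proof is correct and follows essentially the same route as the paper. Both arguments use $0\le x\le m$ to confine $\supp(x)$ to the edges of the matching $m$, and then read off $I$ and $J$ from which edges of $m$ carry positive $x$-weight. Your identity $a^*_i=b^*_{\sigma(i)}$ packages into one statement the paper's two steps: that the sub-matching $\supp(x)$ covers exactly $I$ and $J$, and that the remaining edges of $m$ join $I^c$ to $J^c$.
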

\begin{proof}
Let $m \in (X+Y) \cap M_n$ and write $m = x+y$ with $x \in X,\, y \in Y$.
We have $0\leq x\leq m$ and $0\leq y \leq m$.
Since $m$ is the characteristic vector of a perfect matching and $x\leq m$ holds, the edges in $\supp(x)$ form a matching $m_x$ in $K_{n,n}$.
In particular, each node $i\in I$ is incident to an edge in $m_x$, and each node $i\notin I$ is not incident to an edge in $m_x$.
The same applies to the nodes $j\in J$ and $j\notin J$.
Therefore $|I|=|J|$ and $\supp(x)\subseteq I\times J$.
Now consider an edge $(i,j)\in \supp(m)$.
If $(i,j)\in \supp(x)$, then $(i,j)\in I\times J$.
If $(i,j)\notin \supp(x)$, then $(i,j)\in \supp(y)$, in which case $i\notin I$ and the matching edge $(i,j)$ must use a node $j\notin J$.
Hence $\supp(m) \subseteq (I \times J) \cup (I^c \times J^c)$.
\end{proof}
In particular, if $(X,Y)$ is vertex-realizing, that is, $(X+Y)\cap M_n\neq \emptyset$, then
\[
|(X+Y)\cap M_n|\leq |I|!\cdot (n-|I|)!.
\]
We now choose a measure $\mu_M$ such that the restriction $\mu_M(X)\in (n/3,2n/3]$ forces $|I|$ to be bounded away from both $0$ and $n$.
Let
\[
\mu_{M}:\mathcal{P}_+^{n\times n}\to \R_{\geq 0},
\qquad
\textstyle A\mapsto\max_{a\in V(A)}|\{i\in [n]: \sum_{j=1}^n a_{ij}>0\}|.
\]
Again, for $A=\emptyset$, we use $\mu_M(\emptyset)=0$.
The measure $\mu_M$ corresponds to the degree-measure $A\mapsto\max_{v\in V(A)}|\supp(v)|$ after replacing $|\supp(v)|$ with $|\supp(\pi(v))|$ for a linear map $\pi:\R^{n\times n}\to \R^n$.

\begin{proposition}\label{prop:measure_matching}
    The function $\mu_M$ is a measure.
    Moreover, for all $A, B\in \mathcal{P}^{n\times n}_+$,
    \[
    \mu_M(\mathrm{conv}(A\cup B))\leq \max(\mu_M(A),\mu_M(B)).
    \]
\end{proposition}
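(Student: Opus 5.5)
Write $\pi:\R^{n\times n}\to\R^n$ for the linear map $\pi(a)_i=\sum_j a_{ij}$, so that $\mu_M(A)=\max_{a\in V(A)}|\supp(\pi(a))|$. The key observation is that on $\R^{n\times n}_{\geq 0}$ the map $\pi$ is nonnegative, and hence $\supp(\pi(\cdot))$ behaves like the support of a nonnegative vector. That is, for $a,b\geq 0$ and $\alpha,\beta>0$ we have $\supp(\pi(\alpha a+\beta b))=\supp(\pi(a))\cup\supp(\pi(b))$. We first replace the vertex-based definition by a more convenient one:
\[
\mu_M(A)=\max_{a\in A}|\supp(\pi(a))|.
\]
The inequality ``$\leq$'' is trivial. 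For ``$\geq$'', write any $a\in A$ as a convex combination of vertices with positive coefficients. By the observation, $\supp(\pi(a))$ is the union of the supports of those vertices. This does not immediately bound it by a single vertex, so we must be careful here, and this is the main subtlety.

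To handle it, we will not use the max-over-all-points formulation. Instead we argue about vertices directly, gate by gate.

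\emph{Base and scaling.} We have $\mu_M(\{0\})=0$ and $\mu_M(\{e_{ij}\})=1$. Since $V(\lambda A)=\lambda V(A)$ and supports are invariant under positive scaling, $\mu_M(\lambda A)=\mu_M(A)$.

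\emph{Convex hull.} Every vertex of $\conv(A\cup B)$ is a vertex of $A$ or a vertex of $B$. Hence $\mu_M(\conv(A\cup B))\leq\max(\mu_M(A),\mu_M(B))\leq\mu_M(A)+\mu_M(B)$, which gives both the measure axiom for convex hulls and the stronger inequality stated in the proposition. The convention $\mu_M(\emptyset)=0$ handles empty arguments.

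\emph{Minkowski sum.} Every vertex $v$ of $A+B$ decomposes uniquely as $v=a+b$ with $a\in V(A)$ and $b\in V(B)$: take a linear functional $c$ uniquely maximized at $v$; then $a$ and $b$ are the unique maximizers of $c$ over $A$ and over $B$, so they are vertices. Since $\pi(a),\pi(b)\geq 0$, we get $\supp(\pi(v))=\supp(\pi(a))\cup\supp(\pi(b))$. Therefore
\[
|\supp(\pi(v))|\leq\mu_M(A)+\mu_M(B),
\]
and $\mu_M(A+B)\leq\mu_M(A)+\mu_M(B)$. Here nonnegativity (no cancellation) is the only point needing care; the step is routine once the vertex decomposition of Minkowski sums is invoked.

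Since all three cases only ever use vertices, the subtlety from the first paragraph never arises, and the proof is complete.
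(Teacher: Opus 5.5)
Your proof is correct and follows the same route as the paper. The paper handles the convex-hull step via the fact that every vertex of $\conv(A\cup B)$ is a vertex of $A$ or of $B$, and calls the scaling and Minkowski-sum steps straightforward; your decomposition of a vertex of $A+B$ into vertices of $A$ and $B$ is exactly the detail it omits.

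Two small remarks. First, the reformulation $\max_{a\in A}|\supp(\pi(a))|$ in your first paragraph is not merely subtle but false: the midpoint of $e_{11}$ and $e_{21}$ has two nonzero row sums, while both vertices have one. You should delete it rather than leave it as a retracted step. Second, nonnegativity is not needed in the Minkowski step, since $\supp(\pi(a)+\pi(b))\subseteq\supp(\pi(a))\cup\supp(\pi(b))$ holds for all vectors.
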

\begin{proof}
    We have $\mu_M(\{0\})=0$ and $\mu_M(\{e_{ij}\})=1$ for all $i,j\in [n]$.
    It is straightforward to show that $\mu_M(c\cdot A)=\mu_M(A)$ and $\mu_M(A+B)\leq \mu_M(A)+\mu_M(B)$ for all $A, B\in \mathcal{P}^{n\times n}_+, c\in \R_{>0}$.
    Further, every vertex of $\conv(A\cup B)$ is a vertex of $A$ or a vertex of $B$.
    Thus $\mu_M(\conv(A\cup B))\leq\max(\mu_M(A),\mu_M(B))$.
\end{proof}

\begin{lemma}\label{lem:few_matchings_per_rectangle}
    Let $(X,Y)$ be a vertex-realizing rectangle of $P_{\rm PERM}$ with $\mu_M(X)=r\in (n/3,2n/3]$.
    Then
    \[
    |(X+Y)\cap M_n|\leq  \frac{n!}{\binom{n}{\lfloor n/3\rfloor}}.
    \]
\end{lemma}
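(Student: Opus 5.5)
We combine \Cref{lem:restricted_matchings} with a bound on $|I|$ forced by the measure condition. By \Cref{lem:restricted_matchings}, every perfect matching in $X+Y$ has support in $(I\times J)\cup(I^c\times J^c)$ with $|I|=|J|$. Hence it decomposes into a perfect matching between $I$ and $J$ and one between $I^c$ and $J^c$, giving
\[
|(X+Y)\cap M_n|\leq |I|!\,(n-|I|)!=\frac{n!}{\binom{n}{|I|}}.
\]
It therefore suffices to show $\lfloor n/3\rfloor\leq |I|\leq n-\lfloor n/3\rfloor$. Since $\binom{n}{s}$ is unimodal and symmetric in $s$, this yields $\binom{n}{|I|}\geq\binom{n}{\lfloor n/3\rfloor}$ and hence the claim.

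To relate $|I|$ to $\mu_M(X)$, note that by \Cref{lem:matching:fixed_node_weights} every $x\in X$, in particular every vertex $x\in V(X)$, has row sums $a_i(x)=a_i^*$. Thus $\{i:\sum_j x_{ij}>0\}=I$ for every vertex of $X$, so $\mu_M(X)=|I|$ exactly (as $X\neq\emptyset$). The hypothesis $\mu_M(X)=r\in(n/3,2n/3]$ then gives $n/3<|I|\leq 2n/3$. Consequently $|I|\geq\lfloor n/3\rfloor+1\geq \lfloor n/3\rfloor$, and $n-|I|\geq n/3\geq\lfloor n/3\rfloor$. So $|I|\in[\lfloor n/3\rfloor,\,n-\lfloor n/3\rfloor]$, and unimodality finishes the argument.

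The only delicate point is the identification $\mu_M(X)=|I|$. It relies on the row sums being constant on all of $X$, not merely on the vertices realizing the maximum, which is exactly what \Cref{lem:matching:fixed_node_weights} provides. This lemma needs only nonemptiness, which follows from $X+Y$ being vertex-realizing. The rest is the elementary counting described above.
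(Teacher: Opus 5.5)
Your proposal is correct and follows the paper's proof: you identify $\mu_M(X)=|I|$ from the constant row sums in \Cref{lem:matching:fixed_node_weights}, bound the count by $|I|!\,(n-|I|)!=n!/\binom{n}{|I|}$ via \Cref{lem:restricted_matchings}, and conclude by unimodality and symmetry of the binomial coefficients. Your explicit check that $\lfloor n/3\rfloor\le |I|\le n-\lfloor n/3\rfloor$ just spells out the step the paper compresses into ``since $r$ ranges over $(n/3,2n/3]\cap\N$''.
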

\begin{proof}
    By \Cref{lem:matching:fixed_node_weights}, $\sum_{j=1}^n x_{ij}=a_i^*$ for all $x\in X$.
    With $I=\{i\in[n]: a_i^*>0\}$, we have
    \[
    r=\mu_M(X)=\max_{x\in V(X)}|\{i\in [n]: \sum_{j=1}^n x_{ij}>0\}|=\max_{x\in V(X)}|\{i\in[n]: a_i^*>0\}|=|I|.
    \]
    By \Cref{lem:restricted_matchings} and since $r$ ranges over $(n/3,2n/3]\cap \N$, we have
    \[
    |(X+Y)\cap M_n|\leq r!(n-r)!=\frac{n!}{\binom{n}{r}}\leq \frac{n!}{\binom{n}{\lfloor n/3\rfloor}}.
    \]
\end{proof}
This allows us to state our lower bound for STCs computing $P_{\rm PERM}$.
\matchingstclowerbound*
\begin{proof}
    Consider a decomposition 
    \[
    P_{\rm PERM} = \mathrm{conv}\left(A_0 \cup \bigcup_{i=1}^k (A_i + B_i)\right)
    \]
    as in \Cref{cor:general_bound} with $r=n/3$, so $\mu_M(A_0) \leq n/3$, $n/3 < \mu_M(A_i) \leq 2n/3$ for all $i = 1, \dots, k$.
    Suppose that $A_0\neq \emptyset$.
    Then, there is a $v\in V(A_0)\subseteq P_{\rm PERM}$ with $\mu_M(\{v\})\leq n/3$, which contradicts the fact that $\mu_M(\{x\})=n$ for all $x\in P_{\rm PERM}$.
    Thus we must have $A_0=\emptyset$.

    Each of the $n!$ perfect matchings must be contained in at least one rectangle.
    By \Cref{lem:few_matchings_per_rectangle} each rectangle contains at most $\frac{n!}{\binom{n}{\lfloor n/3\rfloor}}$ perfect matchings.
    Thus
    \[
    k\geq \binom{n}{\lfloor n/3\rfloor}\in 2^{\Omega(n)}.
    \]
    The lower bound on $\size_+(\Phi)$ follows from \Cref{cor:general_bound} and \Cref{prop:measure_matching}.
\end{proof}
We obtain the following lower bound on $\mnnc(P_{\rm PERM})$.
\matchingmnnc*
\begin{proof}
    Let $s$ be the size of a rank-2 maxout ICNN computing $f_{P_{\rm PERM}}$.
    Since $f_{P_{\rm PERM}}$ is positively homogeneous and monotone, \Cref{lem:bias_to_nonbias_for_pos_hom_functions} and \Cref{lem:ICNN_to_mon_for_mon_functions} imply that there is a bias-free monotone rank-2 maxout network of size $s$ computing $f_{P_{\rm PERM}}$.
    By \Cref{lem:mon_maxout_to_STC} this gives an STC $\Phi$ with $\size(\Phi)\leq 2s^2+2sn^2$.
    By \Cref{thm:matching_lower_bound}, $\size(\Phi)\geq 2^{\Omega(n)}$.
    Thus $2s^2+2sn^2\geq 2^{\Omega(n)}$, which implies $s\geq 2^{\Omega(n)}$.
\end{proof}
Although one can optimize over $P_{\rm PERM}$ in polynomial time, it is open whether there are $(\max,+,-)$-circuits or ReLU / maxout networks of polynomial size computing $f_{P_{\rm PERM}}$; see \citep[Section 6.5, Problem 3]{jukna2023tropical}.
Thus, the lower bound above does not by itself separate $\mnnc$ from $\nnc$.
It does show, however, that the extension-complexity lower bound $\xc(P)/2\leq \mnnc(P)$ from \citep{hertrich2024neural} can be exponentially loose, because $\xc(P_{\rm PERM})=n^2$ for $n\geq 4$ \citep[Proposition 5.10]{fiorini2013combinatorial}.

Dropping polynomial factors, the $2^{\Omega(n)}$ lower bound from \Cref{thm:matching_lower_bound} is tight up to a multiplicative factor in the exponent.
This was already observed by \citet{jerrum1982some}.
\begin{proposition}[\text{\citep[Section 4.3]{jerrum1982some}}]
    There is a tropical circuit of size $\mathcal{O}(n2^n)$ computing $f_{P_{\rm PERM}}$.
\end{proposition}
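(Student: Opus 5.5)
The construction is a tropical version of Ryser's expansion, organized as a dynamic program over subsets of rows. For a subset $S\subseteq[n]$ of rows with $|S|=\ell$, define
\[
g_S(x)\coloneq \max\Bigl\{\sum_{i\in S} x_{i\sigma(i)} : \sigma:S\to[\ell]\text{ injective}\Bigr\},
\]
the maximum weight of a perfect matching between the rows $S$ and the first $\ell$ columns. Then $g_{[n]}=f_{P_{\rm PERM}}$, and $g_\emptyset$ is the constant-$0$ input gate.

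The recursion matches the last column $\ell$ to some row of $S$:
\[
g_S(x)=\max_{i\in S}\bigl(g_{S\setminus\{i\}}(x)+x_{i\ell}\bigr).
\]
This identity holds because every bijection $\sigma:S\to[\ell]$ sends exactly one row $i$ to column $\ell$. Its restriction to $S\setminus\{i\}$ is then a bijection onto $[\ell-1]$, and conversely every such pair $(i,\sigma')$ extends to a bijection of $S$. Both sides are therefore maxima over the same set of affine functions.

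We compute all $g_S$ in order of increasing $|S|$. For a fixed $S$, the recursion costs $|S|$ plus gates and $|S|-1$ max gates, which is at most $2n$ gates. Summing over all $2^n$ subsets gives a total size of at most $2n\cdot 2^n\in\mathcal{O}(n2^n)$. The circuit uses only $\max$ and $+$ gates together with input gates, so it is an ordinary tropical circuit.

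The argument is routine; the only point needing care is the bijection argument that justifies the recursion. Counting plus gates alone gives $\sum_\ell \ell\binom{n}{\ell}=n2^{n-1}$, which is consistent with the $n(2^n-1)$ lower bound of \citet{jerrum1982some} up to a factor of $2$.
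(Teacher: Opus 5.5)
Your proof is correct and is the paper's dynamic program with the two sides of $K_{n,n}$ swapped: the paper memoizes over subsets $I$ of right-hand nodes matched to the first $|I|$ left-hand nodes, and you memoize over subsets of rows matched to the first $|S|$ columns. Two side remarks are off. The construction is memoized Laplace expansion rather than Ryser's formula, which relies on inclusion--exclusion and hence on subtraction. In your closing remark, an upper bound of $n2^{n-1}$ lying \emph{below} a lower bound of $n(2^n-1)$ would be a contradiction rather than ``consistency up to a factor of $2$''; already for $n=2$, two plus gates suffice, so $n(2^n-1)$ cannot be a lower bound. The Jerrum--Snir count is $n(2^{n-1}-1)$, and your circuit attains it exactly once you set $g_{\{i\}}=x_{i1}$ directly instead of adding the constant $0$.
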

\begin{proof}
    \citet{jerrum1982some} describe a $(+,\times)$-circuit of size $\mathcal{O}(n2^n)$ for computing the permanent.
    Tropicalizing this circuit, that is, replacing $+$ gates by $\max$ gates and $\times$ gates by $+$ gates, leads to the following dynamic program, which we sketch here for completeness.
    For $I\subseteq [n]$ with $|I|=|J|$, let $M(I)$ be the maximum weight of a matching that matches the first $|I|$ nodes on the left to the nodes on the right in $I$.
    We have $M(\emptyset)=0$.
    For $I\neq \emptyset$, we have with $\ell=|I|$
    \[
    M(I)=\max_{i\in I}\{M(I\setminus \{i\})+x_{\ell i}\}.
    \]
    The output is $M([n])$.
    We show that the recurrence is correct.
    Let $M$ be a matching on the subgraph induced by $[\ell]\times I$ of maximum weight $W$ and let node $\ell$ on the left be matched to node $v\in I$ on the right.
    Since $W=(W-x_{\ell v})+x_{\ell v}=M(I\setminus \{v\})+x_{\ell v}$, we have $W\leq M(I)$.
    Moreover, each choice $i\in I$ in the recursion corresponds to exactly one matching on $[\ell]\times I$, since $M(I\setminus \{i\})$ corresponds to a matching $M'$ on $[\ell-1]\times I\setminus \{i\}$ and $M'\cup \{(\ell,i)\}$ is a matching on $[\ell]\times I$.
    Thus $W\geq M(I)$, which proves equality and shows that the recurrence is correct.
    There are $2^n$ subsets $I\subseteq [n]$ and for each subset $I$, a total of $|I|$ $+$ operations and $|I|-1$ $\max$ operations are performed.
    This gives a tropical circuit of size $\mathcal{O}(n2^n)$.
\end{proof}

\section{Lower Bounds for the Directed Spanning Tree Polytope}
\label{sec:dst_polytope}
Let $A=\{(i,j): i\in [n-1],\,j\in [n]\setminus \{i\}\}$ be the arc set of the complete directed graph where only the node $n$ has no outgoing arcs.
Let $\mathcal{A}_n \subset \{0,1\}^{(n-1)^2}$ be the set of characteristic vectors of arborescences rooted at $n$.
Here an arborescence, also called a directed spanning tree, is a set of arcs such that every node $i\in[n-1]$ has outdegree one and every vertex can reach node $n$.
Equivalently, every nonempty subset $S\subseteq [n-1]$ has at least one outgoing arc leaving $S$.
The \emph{directed spanning tree polytope} is $P_{\rm DST} = \text{conv}(\mathcal{A}_n)$ with the halfspace description
\[
P_{\rm DST}=\left\{z\in \R_{\geq 0}^A: \sum_{j\in [n]\setminus\{i\}} z_{ij}=1,\, i\in [n-1],\, \sum_{(i,j)\in A\cap (S\times S)}z_{ij}\leq |S|-1, \emptyset \neq S\subseteq [n-1]\right\}.
\]
The constraints $\sum_{(i,j)\in A\cap (S\times S)}z_{ij}$ are also called \emph{subtour-elimination} constraints.

For $z \in \mathbb{R}^A$, we define the node weight of a node $i\in [n-1]$ by $a_i(z) \coloneq \sum_{j\in [n]\setminus \{i\}} z_{ij}$.
The following lemma is analogous to \Cref{lem:matching:fixed_node_weights}.
\begin{lemma}
\label{lem:dst:fixed_node_weights}
Let $(X,Y)$ be a nonempty rectangle of $P_{\rm DST}$.
Then there is a vector $a^*\in [0,1]^{n-1}$ such that, for all $x \in X$, we have $a_i(x) = a^*_i$ for all $i \in [n-1]$.
Consequently, for all $y \in Y$ and $i\in [n-1]$, we have $a_i(y)=1 - a^*_i$.
\end{lemma}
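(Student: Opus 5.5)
The plan mirrors the proof of \Cref{lem:matching:fixed_node_weights}, now using only the equality constraints $a_i(z)=1$ for $i\in[n-1]$ from the halfspace description of $P_{\rm DST}$. Since the rectangle is nonempty, both $X$ and $Y$ are nonempty. We fix an arbitrary $y\in Y$ and set $a^*_i\coloneq 1-a_i(y)$. For any $x\in X$ we have $x+y\in X+Y\subseteq P_{\rm DST}$, so $a_i(x+y)=1$. Because $a_i$ is linear, this gives $a_i(x)=1-a_i(y)=a^*_i$ for all $i\in[n-1]$. Thus every $x\in X$ has the same node weights. Applying the same argument with the roles of $X$ and $Y$ exchanged (fixing some $x\in X$) shows $a_i(y)=1-a^*_i$ for every $y\in Y$, and this is consistent with the choice of $a^*$.

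It remains to check that $a^*\in[0,1]^{n-1}$. Since $X,Y\subseteq\R^A_{\geq 0}$ (both lie in $\mathcal{P}^{A}_+$), each node weight is a sum of nonnegative entries. Hence $a^*_i=a_i(x)\geq 0$ and $1-a^*_i=a_i(y)\geq 0$, which gives $0\leq a^*_i\leq 1$.

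There is no real obstacle. The only points needing care are that nonemptiness of $X+Y$ ensures both summands are nonempty, so some $y$ can be fixed, and that nonnegativity comes from the summands lying in the nonnegative orthant. The subtour-elimination constraints play no role here.
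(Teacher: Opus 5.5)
Your proof is correct and follows essentially the paper's argument: the paper gives no separate proof here but defers to the analogous matching lemma, which fixes $y\in Y$, uses $x+y\in P$ and the equality constraints to get $a_i(x)=1-a_i(y)$, and then argues symmetrically for $Y$. Your explicit check that $a^*\in[0,1]^{n-1}$, using $X,Y\subseteq\mathbb{R}^A_{\geq 0}$, fills in a detail the paper leaves implicit.
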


The following lemma shows that the set of arcs that can appear in arborescences contained in a rectangle is restricted.
\begin{lemma}
\label{lem:restricted_dst}
Let $(X,Y)$ be a vertex-realizing rectangle of $P_{\rm DST}$, and let $a^*$ be defined as in \Cref{lem:dst:fixed_node_weights}.
Define
\[
I \coloneq \{i\in [n-1]: a^*_i > 0\},\quad J\coloneq [n-1]\setminus I,
\]
and
\[
E\coloneq \{(i,j): m_{ij}=1 \textrm{ for some } m\in (X+Y) \cap \mathcal{A}_n\}.
\]

Then
\[
|E|\leq (n-1)^2-|I|\cdot (n-1-|I|).
\]
\end{lemma}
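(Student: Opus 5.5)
The plan is to show that for every pair of nodes $i\in I$, $j\in J$, at most one of the two arcs $(i,j)$ and $(j,i)$ lies in $E$. Granting this, the proof is just counting. The arc set $A$ has exactly $(n-1)(n-1)=(n-1)^2$ elements, since every $i\in[n-1]$ has $n-1$ possible heads in $[n]\setminus\{i\}$. The pairs $\{i,j\}$ with $i\in I$, $j\in J$ are pairwise distinct, so the two-element arc sets $\{(i,j),(j,i)\}$ they give are pairwise disjoint. Each such pair then forces at least one arc of $A$ to be missing from $E$, which gives $|E|\leq (n-1)^2-|I|\cdot|J| = (n-1)^2-|I|(n-1-|I|)$.

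First I will record what $x$ and $y$ look like when $m=x+y\in(X+Y)\cap\mathcal{A}_n$ with $x\in X$, $y\in Y$. As in the proof of \Cref{lem:restricted_matchings}, nonnegativity gives $0\leq x\leq m$ and $0\leq y\leq m$. In an arborescence each node $i\in[n-1]$ has exactly one outgoing arc $(i,k)$ in $m$. By \Cref{lem:dst:fixed_node_weights}, $a_i(x)=a_i^*$, and $x$ is supported only on arcs of $m$. Hence $x_{ik}=a_i^*$ and $y_{ik}=1-a_i^*$, with all other out-arcs of $i$ having zero weight in both $x$ and $y$.

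Two cases of this observation are used. If $i\in I$ and $(i,j)\in m$, then $x_{ij}=a_i^*>0$. If $j\in J$, then $a_j^*=0$, so for $(j,i)\in m$ we get $y_{ji}=1$.

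Now suppose, for contradiction, that both $(i,j)\in E$ and $(j,i)\in E$ for some $i\in I$, $j\in J$. Pick $m\in(X+Y)\cap\mathcal{A}_n$ containing $(i,j)$, written as $m=x+y$, and $m'\in(X+Y)\cap\mathcal{A}_n$ containing $(j,i)$, written as $m'=x'+y'$. The rectangle property lets us mix the two decompositions: $x+y'\in X+Y\subseteq P_{\rm DST}$. Applying the subtour-elimination constraint with $S=\{i,j\}$ gives
\[
(x+y')_{ij}+(x+y')_{ji}\;\geq\; x_{ij}+y'_{ji} \;=\; a_i^*+1 \;>\; 1 \;=\; |S|-1,
\]
which is a contradiction. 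This establishes the pairwise claim, and the counting above finishes the proof.

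The main subtlety is the exchange step. We must use $x$ from one decomposition together with $y'$ from another, which is legitimate only because $(X,Y)$ is a rectangle. We also need that the node weights are fixed across all of $X$ and $Y$, which is exactly \Cref{lem:dst:fixed_node_weights}; this is what makes $x_{ij}$ and $y'_{ji}$ take the values $a_i^*$ and $1$ regardless of which decomposition they came from.
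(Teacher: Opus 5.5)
Your proposal is correct and follows the paper's proof essentially step for step. Like the paper, you show that for $i\in I$, $j\in J$ at most one of $(i,j),(j,i)$ lies in $E$: you exchange to $x+y'\in X+Y\subseteq P_{\rm DST}$, violate the subtour-elimination constraint for $S=\{i,j\}$, and then count the $|I|\cdot|J|$ disjoint arc pairs. The only difference is cosmetic: you pin down $x_{ij}=a_i^*$ exactly, whereas the paper only uses $x_{ij}>0$.
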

\begin{proof}
    We prove by contradiction that for every $i\in I$ and $j\in J$, at most one of the arcs $(i,j)$ and $(j,i)$ belongs to $E$.
    Suppose that there are arborescences $m,m'\in (X+Y) \cap \mathcal{A}_n$ with $m_{ij}=1$ and $m'_{ji}=1$.
    We write $m=x+y$ and $m'=x'+y'$ with $x,x'\in X$ and $y,y'\in Y$.
    Since $i\in I$, we have $\sum_{k\in [n]\setminus \{i\}}x_{ik}=a_i^*>0$.
    Because $0\leq x\leq m$ and $m$ has exactly one outgoing arc from $i$, we must have $x_{ij}>0$.
    Since $j\in J$, we have $\sum_{k\in [n]\setminus \{j\}}x'_{jk}=a_j^*=0$, which implies $x'_{ji}=0$ (since $x'\geq 0)$.
    Hence $y_{ji}'=x_{ji}'+y_{ji}'=m_{ji}'=1$.
    Because $X+Y\subseteq P_{\rm DST}$, we have $x+y'\in P_{\rm DST}$.
    However, it violates the subtour-elimination constraint for $S=\{i,j\}$ 
    \[
    (x+y')_{ij}+(x+y')_{ji}=x_{ij}+y'_{ij}+x_{ji}+y'_{ji}\geq x_{ij}+y_{ji}'=x_{ij}+1>1,
    \]
    which gives a contradiction and proves the claim.
    Thus, at least $|I||J|$ of the $(n-1)^2$ arcs in $A$ are not in $E$, which gives $|E|\leq (n-1)^2-|I|\cdot |J|=(n-1)^2-|I|\cdot (n-1-|I|)$.
\end{proof}

We use the measure
\[
\mu_A:\mathcal{P}^{(n-1)^2}_+\to \R_{\geq 0},
\qquad
\textstyle A\mapsto \max_{a\in V(A)}|\{i\in [n-1]: \sum_{j\in [n]\setminus \{i\}}a_{ij}>0\}|
\]
with $\mu_A(\emptyset)=0$.
Again, the measure $\mu_A$ corresponds to the degree-measure $A\mapsto\max_{v\in V(A)}|\supp(v)|$ after replacing $|\supp(v)|$ with $|\supp(\pi(v))|$ for a linear map $\pi:\R^A\to \R^{n-1}$.

\begin{proposition}\label{prop:measure_dst}
    The function $\mu_A$ is a measure.
    Moreover, for all $A, B\in \mathcal{P}^{(n-1)^2}_+$,
    \[
    \mu_A(\mathrm{conv}(A\cup B))\leq \max(\mu_A(A),\mu_A(B)).
    \]
\end{proposition}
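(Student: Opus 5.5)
The plan mirrors the proof of \Cref{prop:measure_matching}. We write $\pi:\R^A\to\R^{n-1}$ for the linear map $z\mapsto (a_i(z))_{i\in[n-1]}$, so that $\mu_A(P)=\max_{a\in V(P)}|\supp(\pi(a))|$. Since every polytope in $\mathcal{P}^{(n-1)^2}_+$ lies in the nonnegative orthant and $\pi$ has nonnegative coefficients, $\pi(a)\geq 0$ for every $a\in P$.

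First we check the conditions on basis elements. We have $\mu_A(\{0\})=0$, and for a standard basis vector $e_{ij}$ with $(i,j)\in A$ we get $\pi(e_{ij})=e_i$, hence $\mu_A(\{e_{ij}\})=1$. Scale invariance is immediate: $V(\lambda P)=\lambda V(P)$ for $\lambda>0$, and $\supp(\pi(\lambda a))=\supp(\pi(a))$.

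Next we treat Minkowski sums. Every vertex $w$ of $P+Q$ decomposes as $w=p+q$ with $p\in V(P)$ and $q\in V(Q)$. To see this, take a linear objective $c$ uniquely maximized at $w$ over $P+Q$; then its maximizers over $P$ and over $Q$ must be single points $p,q$, and these are vertices. Because $\pi(p),\pi(q)\geq 0$, there is no cancellation, so
\[
\supp(\pi(w))=\supp(\pi(p))\cup\supp(\pi(q)),
\]
and therefore $|\supp(\pi(w))|\leq \mu_A(P)+\mu_A(Q)$. Maximizing over $w$ gives subadditivity. Nonnegativity is the only point that needs care here.

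Finally, for convex hulls we use that $V(\conv(P\cup Q))\subseteq V(P)\cup V(Q)$. This yields $\mu_A(\conv(P\cup Q))\leq\max(\mu_A(P),\mu_A(Q))$, which in turn is at most $\mu_A(P)+\mu_A(Q)$, so $\mu_A$ is a measure. If one of the sets is empty, the convention $\mu_A(\emptyset)=0$ handles the degenerate cases, since then the union or sum reduces trivially to the other set.
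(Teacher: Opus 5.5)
Your proposal is correct and follows the paper's proof, which simply defers to the proof of \Cref{prop:measure_matching}: the same checks on $\{0\}$ and $\{e_{ij}\}$, scale invariance, subadditivity under Minkowski sums, and $V(\conv(P\cup Q))\subseteq V(P)\cup V(Q)$. You additionally spell out the vertex decomposition $V(P+Q)\subseteq V(P)+V(Q)$ that the paper calls straightforward. Two small remarks. First, the inclusion $\supp(\pi(p)+\pi(q))\subseteq\supp(\pi(p))\cup\supp(\pi(q))$ holds without any sign assumption, so nonnegativity is not actually needed for subadditivity. Second, an empty summand makes the Minkowski sum empty rather than equal to the other set, but the bound still holds trivially because $\mu_A(\emptyset)=0$.
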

\begin{proof}
    The proof is identical to the proof of \Cref{prop:measure_matching}.
\end{proof}

\begin{lemma}\label{lem:few_dsts_per_rectangle}
    Let $(X,Y)$ be a vertex-realizing rectangle of $P_{\rm DST}$ with $\mu_A(X)\in ((n-1)/3,2(n-1)/3]$.
    Then
    \[
    |(X+Y)\cap \mathcal{A}_n|\leq\left(\frac{7}{9}\right)^{n-1}(n-1)^{n-1}.
    \]
\end{lemma}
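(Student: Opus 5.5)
Following the proof of \Cref{lem:few_matchings_per_rectangle}, we first identify $\mu_A(X)$ with $|I|$. By \Cref{lem:dst:fixed_node_weights}, every $x\in X$, and in particular every vertex of $X$, has node weights $a_i(x)=a_i^*$. Hence
\[
\mu_A(X)=|\{i\in[n-1]: a_i^*>0\}|=|I|.
\]
Writing $s\coloneq|I|$, the hypothesis gives $s\in((n-1)/3,\,2(n-1)/3]$, and $|J|=n-1-s$.

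Next we count arborescences using the arc set $E$ of \Cref{lem:restricted_dst}. Every $m\in(X+Y)\cap\mathcal{A}_n$ has $\supp(m)\subseteq E$. An arborescence is determined by choosing one outgoing arc for each node $i\in[n-1]$. Let $d_i$ be the number of arcs of $E$ leaving $i$. Then
\[
|(X+Y)\cap\mathcal{A}_n|\leq\prod_{i=1}^{n-1}d_i .
\]
By \Cref{lem:restricted_dst}, $\sum_i d_i=|E|\leq(n-1)^2-s(n-1-s)$. The AM--GM inequality then yields
\[
\prod_i d_i\leq\left(\frac{|E|}{n-1}\right)^{n-1}=(n-1)^{n-1}\left(1-\frac{s(n-1-s)}{(n-1)^2}\right)^{n-1}.
\]

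It remains to bound the bracket. Put $t=s/(n-1)\in(1/3,2/3]$. The function $t(1-t)$ is concave and symmetric about $1/2$, so on this interval it is minimized at the endpoints, where it is at least $\frac13\cdot\frac23=\frac29$. Therefore
\[
1-t(1-t)\leq 1-\frac29=\frac79,
\]
and the claimed bound $(7/9)^{n-1}(n-1)^{n-1}$ follows.

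The substantive step is the arc restriction, which is already proved in \Cref{lem:restricted_dst}. The only points needing care are two small checks. First, the product bound must be stated over outgoing arcs inside $E$; if some $d_i=0$, the count is $0$ and the bound holds trivially. Second, one should confirm that $|E|$ counts arcs of $A$, of which there are $(n-1)^2$, which is consistent with the bound in \Cref{lem:restricted_dst}.
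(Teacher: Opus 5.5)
Your proof is correct and follows the paper's argument step for step. You identify $\mu_A(X)=|I|$ via the fixed node weights, bound $|E|\le\frac{7}{9}(n-1)^2$ using \Cref{lem:restricted_dst} together with $|I|(n-1-|I|)\ge\frac29(n-1)^2$, and finish with the product of out-degrees in $E$ and AM--GM.
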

\begin{proof}
    By \Cref{lem:dst:fixed_node_weights}, we have $\sum_{j\in [n]\setminus \{i\}}x_{ij}=a_i^*$ for all $x\in X$ and $i\in [n-1]$.
    With $I=\{i\in[n-1]:a^*_i>0\}$, we have $r\coloneq\mu_A(X)=|I|$.
    Since $r\in ((n-1)/3,2(n-1)/3]$, we have $r (n-1-r)\geq \frac{2(n-1)}{3}\cdot \frac{(n-1)}{3}=\frac{2}{9}(n-1)^2$.
    With \Cref{lem:restricted_dst}, it follows that
    \[
    \textstyle
    |E|\leq (n-1)^2-|I|\cdot (n-1-|I|)=(n-1)^2-r\cdot (n-1-r)\leq \frac{7}{9}(n-1)^2.
    \]
    Every arborescence $m\in (X+Y)\cap \mathcal{A}_n$ is obtained by choosing exactly one outgoing arc for each node $i\in [n-1]$, and all chosen arcs must lie in $E$.
    Let $d_i$ be the outdegree of node $i$ in the directed graph induced by $E$.
    Then
    \[
    |(X+Y)\cap \mathcal{A}_n|\leq 
    \prod_{i=1}^{n-1}d_i
    \leq \left(\frac{\sum_{i=1}^{n-1}d_i}{n-1}\right)^{n-1}
    =\left(\frac{|E|}{n-1}\right)^{n-1}
    \leq \left(\frac{7}{9}\right)^{n-1}(n-1)^{n-1}.
    \]
\end{proof}
\dststclowerbound*
\begin{proof}
    Consider a decomposition 
    \[
    P_{\rm DST} = \mathrm{conv}\left(A_0 \cup \bigcup_{i=1}^k (A_i + B_i)\right)
    \]
    as in \Cref{cor:general_bound} with $r=(n-1)/3$.
    As in the proof of \Cref{thm:matching_lower_bound}, we must have $A_0=\emptyset$.

    There are $n^{n-2}$ arborescences and every arborescence must be contained in at least one rectangle.
    By \Cref{lem:few_dsts_per_rectangle}, one rectangle contains at most $\left(\frac{7}{9}\right)^{n-1}(n-1)^{n-1}$ arborescences.
    Therefore
    \[
    k\geq \frac{n^{n-2}}{\left(7/9\right)^{n-1}(n-1)^{n-1}}
    =\left(\frac{9}{7}\right)^{n-1}\cdot \frac{1}{n}\cdot \left(1+\frac{1}{n-1}\right)^{n-1}\in 2^{\Omega(n)}.
    \]
    Again, the lower bound on $\size_+(\Phi)$ follows from \Cref{cor:general_bound} and \Cref{prop:measure_dst}.
\end{proof}
Dropping polynomial factors, this lower bound is tight up to a multiplicative factor in the exponent.
\begin{proposition}
    There is a tropical circuit of size $\mathcal{O}(n2^n)$ computing $f_{P_{\rm DST}}$.
\end{proposition}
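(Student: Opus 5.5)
Following the proof for $P_{\rm PERM}$, I would give a subset dynamic program, as in \citet{jerrum1982some}, and then tropicalize it. The natural object is a \emph{partial arborescence}: for $S\subseteq[n-1]$, let $D(S)$ be the maximum weight of an arc set in which every node $i\in S$ has exactly one outgoing arc and every node of $S$ can reach $n$ using only arcs whose tails lie in $S$. The heads of these arcs may lie in $S\cup\{n\}$; a head outside $S\cup\{n\}$ would have no outgoing arc, so it could not lead on to $n$. Then $D(\emptyset)=0$, and the output is $D([n-1])$.

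For the recurrence, I would peel off the nodes of $S$ in a canonical order so that every partial arborescence is produced, and nothing else is. The cleanest choice is to remove a \emph{leaf}, i.e. a node of in-degree zero within the arborescence. The difficulty is that a maximum over all leaves is not canonical, so I would instead decompose by \emph{layers}. Let $L(S)$ be the set of nodes of $S$ whose arc points directly to $n$. One option is
\[
D(S)=\max_{\emptyset\neq T\subseteq S}\Big(\sum_{i\in T}x_{in}+D'(S\setminus T,T)\Big),
\]
but this needs a two-set state, which costs roughly $3^n$.

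A simpler route is to note that validity of the recurrence only requires every vertex of $P_{\rm DST}$ to be generated and every generated vector to lie in $P_{\rm DST}$. It does not require canonicity, since $\max$ is idempotent. So I would use
\[
D(S)=\max_{i\in S}\Big(D(S\setminus\{i\})+\max_{j\in (S\setminus\{i\})\cup\{n\}}x_{ij}\Big).
\]
Here $i$ is the node added last: it attaches to an already built structure on $S\setminus\{i\}$, or directly to $n$. Every generated arc set is then a valid partial arborescence on $S$, because adding $i$ with an arc into a structure that already reaches $n$ keeps reachability. Conversely, every arborescence on $S$ has a node $i$ with no incoming arcs from $S$ (a leaf); removing $i$ leaves a valid arborescence on $S\setminus\{i\}$, and $i$'s arc goes into $(S\setminus\{i\})\cup\{n\}$. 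By induction on $|S|$, $D(S)=\max$ over arborescences on $S$, which is the support function of their convex hull. In particular $D([n-1])=f_{P_{\rm DST}}$.

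For the size count, the problem is that the inner maximum $\max_{j}x_{ij}$ depends on both $i$ and $S$, which would give $\mathcal{O}(n^2 2^n)$ gates. To reach $\mathcal{O}(n2^n)$, I would share the inner maxima: define $m_i(S)=\max_{j\in S\cup\{n\}}x_{ij}$ for $i\notin S$. This satisfies $m_i(S)=\max(m_i(S\setminus\{k\}),x_{ik})$ for any $k\in S$, so each value costs one gate. There are at most $n2^n$ pairs $(i,S)$, which gives $\mathcal{O}(n2^n)$ gates for all $m_i$. Each $D(S)$ then uses $|S|$ plus gates and $|S|-1$ max gates, again $\mathcal{O}(n2^n)$ in total. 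The main obstacle is this bookkeeping, together with checking correctly that the leaf-removal argument covers every arborescence. The latter holds because a finite arborescence on $S$ always has a node of in-degree zero: the in-degrees within $S$ sum to at most $|S|-1$, since at least one arc goes to $n$.
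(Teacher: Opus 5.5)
Your proposal is correct and follows essentially the same route as the paper. Both use the subset recurrence $D(S)=\max_{i\in S}\bigl(D(S\setminus\{i\})+\max_{j\in (S\setminus\{i\})\cup\{n\}}x_{ij}\bigr)$, justify it by removing a leaf, and share the inner maxima $m_i(S)$ across subsets to reach $\mathcal{O}(n2^n)$ gates; your rule $m_i(S)=\max(m_i(S\setminus\{k\}),x_{ik})$ and your counting argument for the existence of a leaf simply spell out two steps that the paper states without detail.
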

\begin{proof}
    We give an explicit dynamic program.
    For $I\subseteq [n-1]$, let $D(I)$ be the maximum weight of an arborescence on the node set $I\cup \{n\}$ rooted at $n$.
    We set $D(\emptyset)=0$.
    For $i\in [n-1]$ and $J\subseteq [n-1]\setminus \{i\}$, we define 
    \[
    M_i(J)=\max_{k\in J\cup \{n\}}x_{ik}.
    \]
    Then, for $I\neq \emptyset$,
    \[
    D(I)=\max_{i\in I}\left\{D(I\setminus \{i\})+M_i(I\setminus \{i\})\right\}.
    \]
    The output is $D([n-1])$.
    We now prove the correctness of the recurrence.
    For $I=\{i\}$, $D(\{i\})=x_{in}$ and the recurrence is correct.
    Now, let $S$ be an arborescence on $I\cup \{n\}$ (rooted at $n$) with maximum weight $W$ and let $v$ be a leaf of $S$ with $(v,w)\in S$.
    Then, we have $W=(W-x_{vw})+x_{vw}=D(I\setminus \{v\})+M_v(I\setminus \{v\})$ and thus $W\leq D(I)$.
    Every choice $i\in I$ in the recurrence corresponds to an arborescence on $I\cup \{n\}$, since $D(I\setminus\{i\})$ corresponds to an arborescence on $I\setminus \{i\}\cup \{n\}$ and $M_i(I\setminus \{i\})$ corresponds to choosing an arc from $i$ to $I\setminus \{i\}\cup \{n\}$.
    Thus $D(I)\leq W$ and equality follows.
    Hence the recurrence is correct.
    
    The values $M_i(J)$ can be computed with $\max$ gates using $\mathcal{O}(n2^n)$ gates.
    The recurrence $D(I)$ can be implemented with additional $\mathcal{O}(n2^n)$ $+$ and $\max$ gates.
    Thus the total size of the induced tropical circuit is $\mathcal{O}(n2^n)$.
\end{proof}
For the upper bound in \Cref{cor:mnnc_dst_lower_bound}, we use a $(\max, +,-)$-circuit which is the tropicalization of the $(+,\times, /)$-circuit that corresponds to the directed star-mesh transformation \citep{fomin2016subtraction}; see also \citep{hertrich2025relu} for a tropicalization of the undirected version.

\dstseparation*
\begin{proof}
    \textbf{Lower bound.}
    The lower bound follows exactly as in \Cref{cor:mnnc_matching_lower_bound}.
    Let $s$ be the size of a rank-2 maxout ICNN computing $f_{P_{\rm DST}}$.
    Since $f_{P_{\rm DST}}$ is positively homogeneous and monotone, \Cref{lem:bias_to_nonbias_for_pos_hom_functions} and \Cref{lem:ICNN_to_mon_for_mon_functions} give a bias-free monotone rank-2 maxout network of size $s$ computing $f_{P_{\rm DST}}$.
    By \Cref{lem:mon_maxout_to_STC}, this network gives an STC $\Phi$ with $\size(\Phi)\leq 2s^2+2s(n-1)^2$. By \Cref{thm:dst_lower_bound}, $\size(\Phi)\geq 2^{\Omega(n)}$.
    Therefore $2s^2+2s(n-1)^2\geq 2^{\Omega(n)}$, which implies $s\geq 2^{\Omega(n)}$.

    \textbf{Upper bound.}
    There is a $(+,\times, /)$-circuit of size $\mathcal{O}(n^3)$ which computes the basis generating polynomial $\sum_{S\in \mathcal{A}_n}\prod_{(i,j)\in S}x_{ij}$ via the directed star-mesh transformation \citep[Section 7]{fomin2016subtraction}.
    Replacing $+$ gates by $\max$ gates, $\times$ gates by $+$ gates, and $/$ gates by $-$ gates yields a $(\max,+,-)$-circuit of size $\mathcal{O}(n^3)$ that computes $f_{P_{\rm DST}}$ by tropicalization \citep[Proposition 4]{hertrich2026arithmetic}.
    Since a $(\max,+,-)$-circuit of size $s$ directly translates to a maxout network of size $s$, we have $\nnc(P_{\rm DST})\in \mathcal{O}(n^3)$.
    
    For the sake of completeness, we provide a sketch of the corresponding dynamic program here.
    We start with arc weights $x^{(1)}_{ij}=x_{ij}$ for $i\in [n-1]$ and $j\in [n]\setminus \{i\}$.
    For $k=1,\dots, n-1$, assume that the weights $x^{(k)}_{ij}$ are defined on the current vertex set $\{k,k+1,\dots, n\}$.
    Let 
    \[
    y_k=\max_{i\in\{k+1,\dots,n\}}x^{(k)}_{ki}
    \]
    be the maximum weight of any arc that goes out of the node $k$ at step $k$.
    The idea is to delete one node from the current graph and to modify the arc weights of the remaining arcs such that the maximum weight of an arborescence rooted at $n$ of this smaller graph is equal to the weight of the maximum arborescence rooted at $n$ of the original graph.
    The arc weights are modified via the recursion
    \[
    x^{(k+1)}_{ij}=\max\left(x^{(k)}_{ij},x^{(k)}_{ik}+x^{(k)}_{kj}-y_{k}\right)
    \]
    for all distinct $i,j\in \{k+1,\dots, n\}$ with $i\neq n$.
    The output is $y_1+\dots+ y_{n-1}$.
\end{proof}

\printbibliography
\end{document}